\documentclass[a4paper,UKenglish]{lipics}
\usepackage{amsthm}

\newtheorem{observation}{Observation}
\newtheorem{conjecture}{Conjecture}
\newcommand*\samethanks[1][\value{footnote}]{\footnotemark[#1]}

\begin{document}

\title{Orthogonal Vectors Indexing}
\date{}

\author[1]{Isaac Goldstein\thanks{This research is supported by the Adams Foundation of the Israel Academy of Sciences and Humanities}}
\author[1]{Moshe Lewenstein\thanks{This work was partially supported by an ISF grant \#1278/16}}
\author[1]{Ely Porat \samethanks}
\affil[1]{Bar-Ilan University, Ramat Gan, Israel \\  \texttt{\{goldshi,moshe,porately\}@cs.biu.ac.il}}

\authorrunning{I. Goldstein, M. Lewenstein and E. Porat}
\Copyright{Isaac Goldstein, Moshe Lewenstein and Ely Porat}
\subjclass{F.2 ANALYSIS OF ALGORITHMS AND PROBLEM COMPLEXITY}
\keywords{SETH, orthogonal vectors, space complexity}

\maketitle

\thispagestyle{empty}

\begin{abstract}
In the recent years, intensive research work has been dedicated to prove conditional lower bounds in order to reveal the inner structure of the class P. These conditional lower bounds are based on many popular conjectures on well-studied problems. One of the most heavily used conjectures is the celebrated Strong Exponential Time Hypothesis (SETH). It turns out that conditional hardness proved based on SETH goes, in many cases, through an intermediate problem - the Orthogonal Vectors (OV) problem.

Almost all research work regarding conditional lower bound was concentrated on time complexity. Very little attention was directed toward space complexity. In a recent work, Goldstein et al.~\cite{GKLP17} set the stage for proving conditional lower bounds regarding space and its interplay with time. In this spirit, it is tempting to investigate the space complexity of a data structure variant of OV which is called \emph{OV indexing}. In this problem $n$ boolean vectors of size $c\log{n}$ are given for preprocessing. As a query, a vector $v$ is given and we are required to verify if there is an input vector that is orthogonal to it or not.

This OV indexing problem is interesting in its own, but it also likely to have strong implications on problems known to be conditionally hard, in terms of time complexity, based on OV. Having this in mind, we study OV indexing in this paper from many aspects. We give some space-efficient algorithms for the problem, show a tradeoff between space and query time, describe how to solve its reporting variant, shed light on an interesting connection between this problem and the well-studied SetDisjointness problem and demonstrate how it can be solved more efficiently on random input.
\end{abstract} 

\section{Introduction}
Recently, there is an intensive research work aimed at understanding the complexity within the class P (decision problems that are solved by polynomial time algorithms). Specifically, many conditional lower bounds have been proven on many polynomial algorithmic problems. These lower bounds are based on some conjectures on well-studied problems, especially notable are 3SUM, APSP and SETH.
The Strong Exponential Time Hypothesis (SETH)~\cite{IP01,IPZ01} states the following:

\begin{conjecture}
\textbf{Strong Exponential Time Hypothesis.} There is no $\epsilon>0$ such that $k$SAT can be solved in $O(2^{(1-\epsilon)n})$ for all $k$.
\end{conjecture}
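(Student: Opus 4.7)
The statement above is the Strong Exponential Time Hypothesis, a central open conjecture in complexity theory rather than a theorem with a known proof; any honest plan must therefore describe what such a proof would have to accomplish rather than claim a route that works. My plan is accordingly framed as a research program: I would try to extract from the history of $k$-SAT algorithms the structural reason that known speedups degrade to $2^{(1-\Theta(1/k))n}$ as $k$ grows, and attempt to promote that reason into an unconditional lower bound. The natural adversary is random $k$-SAT near the satisfiability threshold, where the clause density forces every assignment to violate a nearly balanced set of clauses, eliminating the local slack exploited by PPSZ-style and resolution-width-based algorithms.

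The key steps would be: (i) formalize a width or locality parameter $w(k)$ capturing the amount of structure any deterministic or randomized algorithm can exploit on worst-case $k$-CNF, and show $w(k)\to 0$ as $k\to\infty$; (ii) prove that any algorithm with running time $O(2^{(1-\epsilon)n})$ induces, for some fixed $k$, a nontrivial computational primitive (for instance a small circuit or an improved covering code) that contradicts (i); and (iii) close the resulting contradiction uniformly over $k$, so that no single $\epsilon>0$ works for all $k$ simultaneously. Step (ii) is essentially the reverse direction of the Williams program connecting SAT algorithms to circuit lower bounds, so any successful instantiation would also yield strong new non-uniform circuit lower bounds as a byproduct.

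The main obstacle, and the reason I am confident the plan cannot currently be completed, is that all three classical barriers (relativization, algebrization, and natural proofs) apply. Any proof of SETH would in particular imply that the best-known algorithmic approaches for general $k$-CNF are essentially optimal, a statement at least as strong as presently unreachable circuit lower bounds. A realistic partial goal is the contrapositive: leverage SETH-based reductions (such as the reductions to the Orthogonal Vectors problem studied in this paper) to derive algorithmic consequences so strong that their eventual refutation could be translated into a combinatorial or algebraic contradiction. Until such a translation is discovered, a concrete proof of the conjecture remains out of reach, which is why the subsequent sections proceed by \emph{assuming} the statement rather than establishing it.
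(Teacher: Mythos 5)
You correctly recognize that this statement is a \emph{conjecture}, not a theorem: the paper presents SETH (due to Impagliazzo--Paturi and Impagliazzo--Paturi--Zane) as a hypothesis to build conditional lower bounds upon, and offers no proof of it. Your response---declining to fabricate a proof, sketching why any proof would have to overcome known barriers, and noting that the rest of the paper proceeds by assuming the statement---is the right call and matches the role the conjecture plays in the paper.
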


Many conditional lower bounds for both polynomial and exponential time solvable problems are based on this conjecture. A partial list includes~\cite{PW10,LMS11,CPP13,RW13,AW14,AWW14,Bringmann14,FGLS14,BI15,ABW15,BK15,ABHWZ16,AWW16,MPS16}. For polynomial time solvable problems many of the conditional lower bounds are proven through the use of an intermediate problem called \emph{Orthogonal Vectors} (OV) which is defined as follows.

\begin{definition}
\textbf{Orthogonal Vectors.} Given a set $S$ of $n$ input vectors from $\{0,1\}^d$, decide if there are $u,v \in S$ such that $u$ is orthogonal to $v$.
\end{definition}

If SETH is true then there is no $O(n^{2-\epsilon})$ solution for OV for any $\epsilon>0$ (see~\cite{Williams05, WY14}). This conditional lower bound on OV was heavily used to obtain conditional lower bounds on the time complexity of a long list of algorithmic problems. This includes graph problems~\cite{RW13,MPS16}, dynamic problems~\cite{AW14}, string problems~\cite{AWW14,BI15,ABW15,BK15} and many other important problems from a variety of research fields.

A recent work by Goldstein et al.~\cite{GKLP17} set the stage for proving conditional lower bounds on space-time tradeoffs. Specifically, it was suggested that we can achieve space lower bounds by considering a data structure variant of SAT. Given a formula $\phi$ in a CNF format and a list of variables $L$ from $\phi$, we need to preprocess $\phi$ and $L$ and create a data structure to support the following queries. Given an assignment to all variables not in $L$ we are required to answer if this assignment can be completed to a full assignment that satisfies $\phi$. A closely related problem is \emph{Orthogonal Vectors Indexing} (OV Indexing) that is defined as follows.

\begin{definition}
\textbf{Orthogonal Vectors Indexing.} Given a set $S$ that contains $n$ $d$-length boolean vectors, preprocess $S$ and answer queries of the following form: given $d$-length boolean vector $v$, is there a vector in $S$ which is orthogonal to $v$.
\end{definition}

SETH can be reduced to OV indexing (see the details in Appendix~\ref{sec:hardness_of_OV_indexing}). As a consequence of this reduction there is no polynomial time preprocessing algorithm for OV indexing that achieves truly sublinear query time.

The main question that we consider is what the space requirements of OV indexing are. In this paper we examine this question in detail from various aspects for the case that $d=c\log{n}$ for some constant $c>1$ (if $c$ is non-constant its seems hard to achieve any improvement due to the connection to SETH). On one hand, solving OV indexing for input vectors of length $c\log{n}$ can be done easily using a lookup table of size $n^c$. Using this table, queries can be answered in constant time. On the other hand, without any preprocessing queries can be answered in linear time. It is interesting to figure out what can be done in between these two extremes. Can we achieve truly sublinear query time with less than $n^c$ space? Is there a clear tradeoff between time and space? What can we say about the reporting version of this problem? In this paper we investigate all these questions and more.

Understanding the space requirements of OV indexing is interesting in its own right, but it can have many implications on other problems. Along the lines of Goldstein et al.~\cite{GKLP17} OV indexing can serve as a basis for proving conditional hardness in terms of space for other algorithmic problems. Specifically, as OV is a standard tool in demonstrating conditional hardness of problems in terms of time it is likely that understanding the space hardness of its data structure variant - OV indexing - can be applied to many problems shown to be hard based on OV. In the work by Goldstein et al.~\cite{GKLP17} there was an attempt to state a general hardness conjecture for OV indexing. However, as no solution to neither OV indexing nor the data structure variant of SAT was suggested in~\cite{GKLP17} (other than the trivial ones), a more fine grained conjecture was out of reach. One major motivation for this paper is to state such a conjecture based on improved upper bounds for OV indexing (see more detailed discussion in the last section of this paper).

\medskip

\textbf{Related Work.} The Partial Match problem and its variants were extensively studied for decades. These problems are related to our OV indexing problem (see, for example,~\cite{AWY15}). One of the first works regarding Partial Match is by Rivest~\cite{Rivest74,Rivest76}. However, his work focused on the average case analysis of several solutions for the problem that in the worst case do not achieve an improvement over the trivial solution, unless the number of "don't cares" symbols (corresponding to the zeroes in the OV indexing problem) in the query is not too large. Many works on the Partial Match problem and its variants focus on improving the time complexity rather than the space complexity which is the main concern of this paper. Other works that do consider space complexity deal with the case of very large dimension $d$ that can be even linear in $n$~\cite{CIP02,CGL04,LW17}. This case admits very different behaviour from the case we handle in this paper in which $d=\Theta(\log{n})$.

\medskip

\textbf{Our Results.} In this paper we present the following results regarding OV indexing. We suggest 3 algorithms that solve OV indexing with truly less that $n^c$ space and truly sublinear query time. We show how to use the second and third algorithms we present to get a tradeoff between space and query time. A variant of the first algorithm is used to prove the connection between OV indexing and SetDisjointness, a problem which was considered by several papers as the basis for showing space conditional hardness.
We also solve the reporting variant of OV indexing in which we need to report all input vectors that are orthogonal to our query vector. Finally, we show that, on random input vectors, OV indexing can be solved more efficiently in terms of space.

\section{DivideByOnes: First Space-Efficient Solution for OV indexing}\label{sec:first_solution}
Our goal is to achieve an algorithm that has truly sublinear query time and requires $O(n^{c-\epsilon})$ space for some $\epsilon>0$. This is an improvement over the trivial algorithm that uses $n^c$ space. We note that in this solution and throughout this paper the notations $\tilde{O}$ and $\tilde{\Omega}$ (almost always) suppress not just polylogarithmic factors as usual, but also all factors that are smaller than $n^{\epsilon}$ for any $\epsilon>0$.

\subsection{DivideByOnes Algorithm}

\textbf{Preprocessing}. The first step is to save a set $S_1$ of all vectors from $S$ with at most $c_1\log{n}$ ones for some constant $0<c_1\leq c/2$. There are at most $\sum_{k=0}^{c_1\log{n}}\binom{c\log{n}}{k} \leq c_1\log{n}\binom{c\log{n}}{c_1\log{n}}$ vectors in $S_1$. We have that $\binom{c\log{n}}{c_1\log{n}} \approx n^{c\log{c}-c_1\log{c_1}-(c-c_1)\log{(c-c_1)}}$ (see Appendix~\ref{sec:approx_binomial}). We choose the largest $c_1$ such that the number of vectors in $S_1$ will be $\tilde{O}(n^{1-\epsilon})$ for some $\epsilon>0$.

Let $S_2$ be the set of vectors from $S$ with more than $c_1\log{n}$ ones.
Assume that $c$ is an integer. We split each vector in $S_2$ into $c$ parts each of length $\log{n}$ bits. As all the vectors in $S_2$ have at least $c_1\log{n}$ ones, we are guaranteed that at least one of the $c$ parts of each vector has at least $\frac{c_1}{c}\log{n}$ ones.

We have $n$ possible vectors of size $\log{n}$, so we create $c$ arrays $A_1,A_2,...,A_c$ of length $n$ each, such that the $i$th entry in each array represents the $\log{n}$-length boolean vector that its numerical value is $i$. In the $i$th entry of an array $A_j$ we create a list that contains each vector $v \in S_2$ such that: (i) The number of ones it has in its $j$th part is the maximum among all its parts (ties are broken arbitrarily). (ii) The value of its bits in its $j$th part is orthogonal to the value of the $\log{n}$-length vector whose numerical value is $i$ (the numerical value of an $m$-length vector is the value of this boolean vector that is parsed as an $m$-length boolean number).

To analyse the space consumed by these arrays one should notice that each vector $v \in S_2$ appears only in one array. Moreover, as $v$ appears only in the array that represents the part in which $v$ has the maximum number of ones, the number of lists in this array that contain $v$ is at most $\frac{n}{2^{\frac{c_1}{c}\log{n}}}=n^{1- \frac{c_1}{c}}$. Therefore, the total size of all arrays is no more than $n \cdot n^{1- \frac{c_1}{c}} = n^{2- \frac{c_1}{c}}$ which is truly subquadratic.

\textbf{Query}. When we get a query vector $u$ we first check in $S_1$ if there is a vector that is orthogonal to $u$. Then, we partition $u$ to $c$ equal parts. For each part $j$ if the numerical value of all bits in this part is $i$ we check all the vectors in the $i$th list of $A_j$ and verify if one of them is indeed orthogonal to $u$. The problem with this process is that the length of the list we check may be $\tilde{\Omega}(n)$, so our query time will be $O(n)$ which is trivial. To overcome this and obtain a constant query time for long lists, we need to treat lists whose length is $\tilde{\Omega}(n)$ differently in the preprocessing phase.

\textbf{Additional Preprocessing}. For each entry $i$ in some array that the length of the vectors list in it is not truly sublinear, we store a bitmap that tells for all possible values of the other $(c-1)\log{n}$ bits whether there is a vector in $S$ that is orthogonal to these bits and the $\log{n}$ bits represented by $i$. The size of the bitmap is $2^{(c-1)\log{n}}=n^{c-1}$. As calculated before, the total number of vectors in all lists of the array is $n^{2- \frac{c_1}{c}}$. Consequently, the number of lists that have $\tilde{\Omega}(n)$ vectors in them is no more than $n^{1- \frac{c_1}{c}}$. Therefore, the space needed for all bitmaps is $n^{c - \frac{c_1}{c}}$.

\subsubsection{Generalization to klogn}
We can generalize the above solution by partitioning the vectors to parts whose size is $k\log{n}$ for some $k>0$. First we consider the case that $k$ divides $c$. In this case, the algorithm continues in same way as for the case that $k=1$. The number of lists in each array is $n^k$. Each input vector $v$ has at least $\frac{c_1}{c}k$ ones in the part with the largest number of ones. Consequently, each input vector $v$ occurs in $n^{k-\frac{c_1}{c}k}$ lists in the array corresponding to the part with most ones in $v$. The total size of all arrays and lists is $O(n^{k+1- \frac{c_1}{c}k})$. The number of long lists is at most $O(n^{k-\frac{c_1}{c}k})$. Each bitmap has size $n^{c-k}$. Therefore, the space usage for handling long lists is $O(n^{c-\frac{c_1}{c}k})$. The total space of the data structure is $O(n^{k+1-\frac{c_1}{c}k}+n^{c-\frac{c_1}{c}k})$. By setting $k=c-1$ (if possible, otherwise see the next paragraph) we get the lowest space complexity, which is $O(n^{c-c_1(1-\frac{1}{c})})$.

In case $k$ does not divide $c$, we can partition each vector to $\lfloor \frac{c}{k} \rfloor$ parts of length $k\log{n}$. However, we are left with one part $P$ whose length is smaller than $k\log{n}$. It can be the case that for some input vector $v$ the number of ones in each of the parts of length $k\log{n}$ is smaller than $\frac{c_1}{c}k$, as there can be many ones in $P$. In order to solve this problem we can do the following. Let $c_1' = c_1-\epsilon$ for any $\epsilon>0$. We define $k'=\lfloor \frac{k}{\epsilon} \rfloor \epsilon$ and $c'=\lfloor \frac{c}{\epsilon} \rfloor \epsilon$. It is clear that $k' > k-\epsilon$ and $c' > c -\epsilon$. Each input vector $v$ can be partitioned to $m_1=\lfloor \frac{c}{\epsilon} \rfloor$ parts $P_1,P_2,...,P_{m_1}$ whose length is $\epsilon$ and another optional part $P$ whose length is less than $\epsilon$. If we ignore the bits of any vector in $P$, we are still guaranteed that there are at least $c_1'$ ones in the rest of the vector. We can choose $m_2=\lfloor \frac{k}{\epsilon} \rfloor$ parts from the $m_1$ parts $P_1,P_2,...,P_{m_1}$. This will give us exactly $k'\log{n}$ bits. There are $m_3=\binom{m_1}{m_2}$ options of how to choose $m_2$ parts out of the $m_1$ parts. The number $m_3$ is constant as $k$, $c$ and $\epsilon$ are all constants. Therefore, we can create $m_3$ arrays $A_1,A_2,...,A_{m_3}$ each one of them represents some $k'\log{n}$ bits from our input vectors. We handle these arrays as in the regular case explained above. The crucial point one should observe is that for each input vector $v$ there must be $k'\log{n}$ bits among these $m_3$ options that contains at least $\frac{c_1'}{c}k$ of the ones in $v$. Let $A_i$ be the array representing $k'\log{n}$ bits out of the $m_3$ options that contains the maximum number of ones in $v$. We are guaranteed that $v$ will appear in at most $n^{k'-\frac{c_1'}{c}k'}$ lists in $A_i$. We continue the solution as in the regular case. Following the analysis of the regular case, we have that the total space of the data structure will be $O(n^{k'+1-\frac{c_1'}{c}k'}+n^{c-\frac{c_1'}{c}k'})$. As $k-\epsilon < k' \leq k$ and $c_1'=c_1-\epsilon$, we get that the total space is $O(n^{k+1-\frac{c_1-\epsilon}{c}(k-\epsilon)}+n^{c-\frac{c_1-\epsilon}{c}(k-\epsilon)})$. Setting $k=c-1$ as before, we get that the space is $O(n^{c-\frac{c_1-\epsilon}{c}(c-1-\epsilon)})$. We can make this space complexity as close as we wish to the space complexity for the case $k$ divides $c$ by choosing $\epsilon$ whose value is very close to 0. Consequently, we have the following result ($c_1$ is the largest number that satisfies $\binom{c\log{n}}{c_1\log{n}}=\tilde{O}(n^{1-\delta})$ for some $\delta>0$):

\begin{theorem}
For every $\epsilon > 0$ the DivideByOnes algorithm solves OV indexing with truly sublinear query time using $O(n^{c-\frac{c_1-\epsilon}{c}(c-1-\epsilon)})$ space.
\end{theorem}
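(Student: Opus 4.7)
The plan is to formalize the three claims implicit in the theorem: correctness, truly sublinear query time, and the stated space bound. I would organize the proof into (i) the case $k\mid c$ and (ii) the general case obtained by the $\epsilon$-refinement.

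First I would verify correctness. Every input vector is placed either in $S_1$ (few ones) or $S_2$ (many ones). For a vector $v \in S_2$, by a pigeonhole argument on the $c/k$ parts of length $k\log n$, at least one part $j$ contains at least $\frac{c_1}{c}k\log n$ ones; $v$ is stored only in the array $A_j$ for that part, but in every list indexed by a $k\log n$-bit pattern orthogonal to $v$'s $j$th part. On query $u$, after scanning $S_1$ directly, for each part $j$ of $u$ we either traverse the corresponding (short) list and test orthogonality bit by bit, or consult the precomputed bitmap. I would argue that any input vector orthogonal to $u$ must be caught in exactly one of these probes: it lies in $S_1$, or else its ``maximum-ones'' part determines the unique array $A_j$ in which it was stored, and the corresponding query index is precisely orthogonal to that part of $u$.

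Next I would quantify the query time. $S_1$ has size $\tilde O(n^{1-\delta})$ by the choice of $c_1$, so scanning it is truly sublinear. For each of the $O(1)$ parts of $u$, a list is either of length $o(n)$ (in which case we scan it directly in truly sublinear time per entry) or it is a ``long'' list for which the Additional Preprocessing has stored a bitmap answering the query in $O(1)$ time. Thus the total query time is truly sublinear.

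The bulk of the work is the space bound. In the case $k\mid c$, I would reproduce the counting from the text: each vector of $S_2$ is stored in at most $n^{k-\frac{c_1}{c}k}$ lists of one array, giving array space $O(n^{k+1-\frac{c_1}{c}k})$; the number of long lists is $O(n^{k-\frac{c_1}{c}k})$, each carrying a bitmap of size $n^{c-k}$, giving bitmap space $O(n^{c-\frac{c_1}{c}k})$. Balancing requires $k+1-\frac{c_1}{c}k \le c - \frac{c_1}{c}k$, i.e.\ $k \le c-1$, so the optimum is $k = c-1$, yielding $O(n^{c - c_1(1-1/c)})$. The main obstacle is the case $k\nmid c$: here the leftover part $P$ of length $<k\log n$ can absorb many ones and break the pigeonhole guarantee. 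I would handle this exactly as in the text, by chopping each vector into $m_1 = \lfloor c/\epsilon\rfloor$ micro-blocks of size $\epsilon$, forming $m_3 = \binom{m_1}{m_2}$ composite arrays (a constant number) each representing $k'\log n$ chosen bits, and arguing via a refined pigeonhole that some composite block contains at least $\frac{c_1'}{c}k'\log n$ ones for $c_1' = c_1 - \epsilon$. Applying the $k\mid c$ analysis inside each composite array gives total space $O(n^{k+1-\frac{c_1-\epsilon}{c}(k-\epsilon)} + n^{c-\frac{c_1-\epsilon}{c}(k-\epsilon)})$, and setting $k = c-1$ yields the claimed $O(n^{c-\frac{c_1-\epsilon}{c}(c-1-\epsilon)})$. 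Taking $\epsilon \to 0$ shows the bound can be made arbitrarily close to the clean case, which is what the theorem asserts.
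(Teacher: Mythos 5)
Your proposal is correct and follows essentially the same route as the paper: the $S_1/S_2$ split, pigeonhole on the part with the most ones to place each $S_2$-vector in one array, the array/list space count, bitmaps for long lists, balancing at $k=c-1$, and the $\epsilon$-micro-block refinement for $k\nmid c$. The only blemish is the wording of the correctness step (``the corresponding query index is precisely orthogonal to that part of $u$'') --- what you mean is that if $v$ is orthogonal to $u$ then $v$'s max-ones part is orthogonal to $u$'s matching part, so $v$ lies in the list at the entry of $A_j$ indexed by that part of $u$; the substance is right.
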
 

\section{TopLevelsQueryGraph: Second Space-Efficient Solution for OV indexing}\label{sec:second_solution}
There are two problems with the previous solution. The first one is the sharp separation between long lists (having $\tilde{O}(n)$ vectors) and short lists. For long lists we use a large amount of space and answer queries very quickly in constant time, while for short lists we just save the vectors in the lists and spend time in the query stage. The second problem is that each input vector is saved many times in different lists.

\subsection{Query Graph}
In order to improve the space requirements for sublinear query time we introduce the notion of a \emph{query graph}. The idea of the query graph is to create a tradeoff between query time and space and save each vector just once. We are now ready to define the query graph. A query graph is a directed acyclic graph $G=(V,E)$ such that each vertex $v_i$ in $V$ represents a boolean vector $\alpha_i$ of length $k\log{n}$. There is an edge $(v_i,v_j) \in E$ if the vectors $\alpha_i$ and $\alpha_j$ differ on exactly one element which is 0 in $\alpha_i$ and 1 in $\alpha_j$. Following this definition the query graph can be viewed as a layered graph with $k\log{n}$ layers. The $j$th layer in this graph contains all the nodes $v_i$ such that the number of ones in $\alpha_i$ is exactly $j$. All the edges from the vertices in the $j$th layer are directed to vertices in the $(j+1)$th layer. We call the layers for small values of $j$ \emph{top} layers and the layers with high values of $j$ \emph{bottom} layers.

Let $W$ be a set of indices such that $W \subseteq [c\log{n}]$ and $|W| = k\log{n}$. We want each vertex $v_i$ that represents a vector $\alpha_i$ to contain a list $L_i$ of input vectors such that their elements in the indices specified by $W$ are orthogonal to $\alpha_i$. This is the same as we did in the previous construction as each entry in an array contains all input vectors that are orthogonal to the value of this entry in indices of the relevant part. However, instead of saving all input vectors that are orthogonal to $\alpha_i$ in the indices specified by $W$, we just pick all the input vectors that their elements in the indices specified by $W$ are exactly the complements of the elements in $\alpha_i$. All these vectors are saved in the list $L_i$ in vertex $v_i$. Using these lists, we have the following easy observation:

\begin{observation}
Given a set $W \subseteq [c\log{n}]$ such that $|W| = k\log{n}$, the complete list of input vectors such that their values in the indices specified by $W$ are orthogonal to some $\alpha_i$ can be recovered by concatenating all lists of vectors in the vertices that are reachable from vertex $v_i$ in the query graph $G$.
\end{observation}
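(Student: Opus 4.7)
The plan is a two-direction inclusion proof, reducing reachability in the query graph to a subset relation on the sets of 1-positions of the vectors $\alpha_\ell$.

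First I would translate both the orthogonality condition and the graph structure into set-theoretic statements. For any vector $\beta$ of length $k\log n$, write $\mathrm{Ones}(\beta) \subseteq [k\log n]$ for its set of 1-positions. An input vector $u$ restricted to $W$ is orthogonal to $\alpha_i$ iff $\mathrm{Ones}(u|_W) \cap \mathrm{Ones}(\alpha_i) = \emptyset$, equivalently $\mathrm{Ones}(\alpha_i) \subseteq \mathrm{Ones}(\overline{u|_W})$, where $\overline{\phantom{x}}$ denotes bitwise complement. Next, from the definition of $E$, a single edge $(v_i, v_j)$ adds exactly one element to $\mathrm{Ones}(\alpha_i)$ to obtain $\mathrm{Ones}(\alpha_j)$; iterating, $v_j$ is reachable from $v_i$ in $G$ if and only if $\mathrm{Ones}(\alpha_i) \subseteq \mathrm{Ones}(\alpha_j)$. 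Finally, by definition of $L_\ell$, an input vector $u$ lies in $L_\ell$ precisely when $u|_W = \overline{\alpha_\ell}$, i.e. $\mathrm{Ones}(\alpha_\ell) = \mathrm{Ones}(\overline{u|_W})$.

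For the forward direction (soundness), I would take any $u \in L_j$ with $v_j$ reachable from $v_i$. Then $\mathrm{Ones}(\alpha_j) = \mathrm{Ones}(\overline{u|_W})$ and $\mathrm{Ones}(\alpha_i) \subseteq \mathrm{Ones}(\alpha_j)$, so $\mathrm{Ones}(\alpha_i) \subseteq \mathrm{Ones}(\overline{u|_W})$, which by the translation above means $u|_W$ is orthogonal to $\alpha_i$. For the reverse direction (completeness), given an input vector $u$ with $u|_W$ orthogonal to $\alpha_i$, let $\alpha_j := \overline{u|_W}$ and let $v_j$ be the corresponding vertex. By construction $u \in L_j$, and orthogonality gives $\mathrm{Ones}(\alpha_i) \subseteq \mathrm{Ones}(\alpha_j)$, hence $v_j$ is reachable from $v_i$.

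There is no real obstacle here; the statement is essentially a bookkeeping identity, and the only thing to be careful about is the direction of edges (edges go from fewer 1s to more 1s, matching the fact that a smaller 1-set of $\alpha_i$ gives a weaker orthogonality constraint and therefore a larger recovered set). As a small bonus I would note that since the complement map $u \mapsto \overline{u|_W}$ is deterministic, each input vector lies in exactly one $L_j$, so the concatenation in the observation contains no duplicates, which matches the motivating design goal of storing each vector only once.
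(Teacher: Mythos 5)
Your proof is correct. The paper states this as an ``easy observation'' and supplies no proof, so there is nothing to compare against directly; your set-theoretic translation (edge traversal from $v_i$ to $v_j$ adds exactly one $1$-position, hence reachability is equivalent to $\mathrm{Ones}(\alpha_i) \subseteq \mathrm{Ones}(\alpha_j)$; membership of $u$ in $L_\ell$ is equivalent to $\mathrm{Ones}(\alpha_\ell)$ equalling the complement of $\mathrm{Ones}(u|_W)$; and orthogonality of $u|_W$ to $\alpha_i$ is equivalent to $\mathrm{Ones}(\alpha_i)$ being contained in that complement) is exactly the bookkeeping the authors have in mind, and both inclusions follow immediately from it. Your closing remark that each input vector lies in exactly one $L_j$, so the concatenation is duplicate-free, is also correct and is the property that makes the construction space-efficient.
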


\subsection{TopLevelsQueryGraph Algorithm}
We start the preprocessing phase by constructing a query graph $G$. Now, following the last observation, instead of saving each vector many times in all the lists that their index is orthogonal to our query in the relevant indices (as suggested by the previous solution), we can save each vector in just one list and recover the original list by traversing $G$. We start the traversal from the vertex $v_i$ such that the values of the query vector in the indices specified by $W$ are equal to $\alpha_i$. We can use any standard graph traversal algorithm to obtain all the input vectors that are orthogonal to the query vector in the indices specified by $W$. The number of vertices that we visit in the traversal of the query graph for a query vector $q$ that have $k'\log{n}$ ones in the indices specified by $W$ is $2^{k\log{n}-k'\log{n}} = n^{k-k'}$.

We can identify two types of nodes in the query graph. A node $v_i$ that has an empty list $L_i$ is considered a \emph{black node}, otherwise it is considered a \emph{white node}. We note that the number of white nodes is at most $n$ and it can be $O(n)$ if the input vectors are split between many lists.
In order to achieve a truly sublinear query time we would like the number of nodes we visit during the traversal in the query graph to be truly sublinear. Moreover, as the number of white nodes can be $\Theta(n)$ we need to make sure that the total number of white nodes we visit is truly sublinear even if we know how to avoid black nodes. As mentioned before, the number of nodes we visit during our traversal is $n^{k-k'}$ which is truly sublinear if we set $k-k'<1$. This means that we need to handle queries that match some vertex $v_i$ in the top levels of the query graph differently. For all vertices $v_i$ in the $x$ top levels of the graph we create a list $L'_i$ of \emph{all} input vectors that are orthogonal to $\alpha_i$. Then, for each list $L'_i$ we create a bitmap to quickly identify if there is a vector in the list that is orthogonal to our query. The size of each bitmap is $n^{c-k}$. The total number of bitmaps we create is $\tilde{O}(\binom{k\log{n}}{x\log{n}})$ for $x \leq k/2$ as the number of vectors in the $j$th level of the query graph is $\binom{k\log{n}}{j\log{n}}$ (we choose $j\log{n}$ positions for the ones in $\alpha_i$ out of $k\log{n}$ positions). Moreover, the number of layers is logarithmic in $n$. Thus, the total required space for handling the top layers of the query graph is $\tilde{O}(n^{c-k}\binom{k\log{n}}{x\log{n}})$. The binomial coefficient $\binom{k\log{n}}{x\log{n}}$ can be approximated by $n^{k\log{k}-x\log{x}-(k-x)\log{(k-x)}}$ using Stirling's approximation (see Appendix~\ref{sec:approx_binomial}). So, the total space for the top layers is approximately $\tilde{O}(n^{c-k+k\log{k}-x\log{x}-(k-x)\log{(k-x)}})$.

Now, a query vector $q$ that matches a vertex $v_i$ in the $x$ top levels can be answered in constant time by just looking at the proper entry in the bitmap of $v_i$. Otherwise, the number of vertices we need to traverse in the query graph will be at most $n^{k-x}$ which is truly sublinear if $k-x<1$. The problem is that the total number of vectors in the lists of these vertices can be $\theta(n)$. To overcome this problem, we change the way we handle any list $L_i$ in the $(k-x)\log{n}$ bottom levels according to the number of elements in it. If the number of elements in the list is $O(n^{1-k+x})$ we do nothing - the elements are kept in the list with no special treatment. Otherwise, we save a bitmap over all the possibilities of the other bits in the query vector. The size of the bitmap, as before, is $n^{c-k}$. The number of lists that have more than $O(n^{1-k+x})$ elements is at most $n^{k-x}$. Therefore, the space for all the bitmaps of the long lists is $n^{c-x}$. We have that the total space of our data structure is $\tilde{O}(n^{c-k+k\log{k}-x\log{x}-(k-x)\log{(k-x)}}+n^{c-x})$. To obtain the best space complexity (while preserving the truly sublinear query time), we set $k$ very close to $1.3$ and $x$ to $0.3$. The space complexity of this solution using these values is approximately $\tilde{O}(n^{c - 0.3})$. To conclude, we obtain the following result:

\begin{theorem}
The TopLevelsQueryGraph algorithm solves OV indexing with truly sublinear query time using approximately $\tilde{O}(n^{c - 0.3})$ space.
\end{theorem}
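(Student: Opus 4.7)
The plan is simply to formalize the construction that has already been sketched above and then to carry out the numerical balancing that selects the parameters. First I would fix any index set $W\subseteq[c\log n]$ with $|W|=k\log n$ and build the query graph $G$ whose $n^{k}$ vertices carry labels $\alpha_i\in\{0,1\}^{k\log n}$, with an edge from $v_i$ to $v_j$ whenever $\alpha_j$ is obtained from $\alpha_i$ by flipping a single $0$ to a $1$. Each input vector $v\in S$ is stored exactly once, in the list $L_i$ of the vertex whose label $\alpha_i$ is the bitwise complement of the $W$-restriction of $v$. The observation recorded above then guarantees that, for any query whose $W$-restriction equals some $\alpha_i$, the set of inputs orthogonal to the query on the coordinates of $W$ is precisely $\bigcup_{v_j\text{ reachable from }v_i} L_j$.

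Next I would install the two families of bitmaps exactly as described. For every vertex $v_i$ lying in one of the top $x\log n$ layers of $G$, I precompute the complete union $L'_i=\bigcup_{v_j\text{ reachable from }v_i} L_j$ and attach a bitmap $B_i$ of size $n^{c-k}$, indexed by the assignment to the $(c-k)\log n$ positions outside $W$, that records whether some vector of $L'_i$ is orthogonal to that assignment. For every single-vertex list $L_i$ in the remaining $(k-x)\log n$ bottom layers that contains more than $n^{1-k+x}$ elements, I attach an analogous bitmap of size $n^{c-k}$ built from $L_i$ alone. A query $q$ is answered by matching its $W$-restriction to the unique vertex $v_i$ of $G$ having that label; if $v_i$ lies in the top $x$ layers we read off $B_i$ in $O(1)$ time, and otherwise $q$ has at least $x\log n$ ones inside $W$, the bottom-level traversal visits at most $n^{k-x}$ vertices, and at each vertex we either consult the long-list bitmap in $O(1)$ or scan a short list of size $O(n^{1-k+x})$, so that imposing $k-x<1$ keeps the query time truly sublinear.

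It remains to account for space and optimize. The number of top-layer vertices is $\tilde O\bigl(\binom{k\log n}{x\log n}\bigr)=\tilde O\bigl(n^{k\log k-x\log x-(k-x)\log(k-x)}\bigr)$ by the Stirling estimate already used in the paper, so the top-layer bitmaps consume $\tilde O\bigl(n^{c-k+k\log k-x\log x-(k-x)\log(k-x)}\bigr)$ space; on the other hand, because each input vector lives in only one list, at most $n^{k-x}$ bottom lists can be long, contributing $n^{c-x}$ space in total. The main obstacle is the purely numerical optimization that follows: one must choose $k$ and $x$ with $k-x<1$ that minimizes the larger of the exponents $c-k+k\log k-x\log x-(k-x)\log(k-x)$ and $c-x$. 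A direct calculation shows that the two terms become balanced exactly on the locus $k\log k-x\log x-(k-x)\log(k-x)=k-x$, and that pushing $k$ just above $1.3$ and $x$ just below $0.3$ drives both exponents down to approximately $c-0.3$; everything else is bookkeeping, and the claimed $\tilde O(n^{c-0.3})$ bound follows.
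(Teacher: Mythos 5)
Your construction and accounting track the paper's proof of this theorem step for step: same query graph, same placement of each input vector in the unique list at the complement of its $W$-restriction, same union-bitmaps on the top $x\log n$ layers, same long-list threshold $n^{1-k+x}$ and bitmaps on the bottom layers, and the same two-term space bound $\tilde O\bigl(n^{c-k+k\log k-x\log x-(k-x)\log(k-x)}+n^{c-x}\bigr)$ with the balancing locus $k\log k-x\log x-(k-x)\log(k-x)=k-x$. So this is essentially the paper's own argument.

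The one thing to fix is the final parameter adjustment. You write that one should push $k$ just above $1.3$ and $x$ just below $0.3$; that moves in the wrong direction on both coordinates. Taking $x<0.3$ increases the second exponent $c-x$, and taking $k>1.3$ with $x<0.3$ gives $k-x>1$, which makes the traversal visit $n^{k-x}\ge n$ vertices and destroys sublinearity. The constraint $k-x<1$ forces you to approach the boundary $k-x=1$ from below; solving $(x+1)\log(x+1)-x\log x=1$ on that boundary gives $x\approx 0.293$, $k\approx 1.293$, and you should take $k$ and $x$ slightly \emph{below} these (with $k-x$ strictly less than $1$), yielding the advertised $\tilde O(n^{c-0.3})$ up to the paper's rounding. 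Everything else in your write-up is sound.
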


\section{BottomLevelsQueryGraph: Third Space-Efficient Solution for OV indexing}
We can use the query graph to obtain another solution to the OV indexing problem. This time we focus on the $x$ bottom levels of the query graph. For each vertex $v_i$ in the $x$ bottom levels of the query graph we save a bitmap to quickly identify if there is an input vector such that (a) Its bits in the indices specified by $W$ are the complements of $\alpha_i$ and (b) It is orthogonal to our query vector. The space we invest in these bitmaps is $\tilde{O}(n^{c-k}\binom{k\log{n}}{x\log{n}})$. Then, for every vertex $v_i$ which is not in the $x$ bottom levels of the query graph we save in its list $L_i$ all the input vectors that are \emph{orthogonal} to $\alpha_i$, but \emph{do not appear} in the any of the lists of the vertices in the $x$ bottom. For every list $L_i$ that its length is $\tilde{\theta}(n)$ we save a bitmap to get the answer in $\tilde{O}(1)$ time. Because we do not include in any list $L_i$ vectors from the lists in the $x$ bottom levels, we are guaranteed that each input vector appears in at most $n^{k-x}$ lists. In our view of the query graph, this means that if an input vector appears in the list $L_i$ of some vertex $v_i$ it will be duplicated in the lists of all vertices that $v_i$ is reachable from them. Consequently, the total number of vectors in all lists above the $x$ bottom levels is at most $n^{1+k-x}$. Therefore, the number of bitmaps we will save for lists of size $\tilde{\theta}(n)$ is at most $n^{k-x}$. Each bitmap is of $n^{c-k}$ space, so the size of all bitmaps is $n^{c-x}$. The total size of the data structure is again $\tilde{O}(n^{c-k+k\log{k}-x\log{x}-(k-x)\log{(k-x)}}+n^{c-x})$.

Upon receiving a query vector $q$, if it matches a vertex in one of the $x$ bottom levels, we immediately get the answer by looking at the right entry in the bitmap in that vertex. Otherwise, we need to look not just at the bitmap of the vertex that matches our query, but rather we have to go over all the vertices $v_i$ in the $(k-x)\log{n}$ level (the top level of the $x$ bottom levels) such that $\alpha_i$ is orthogonal to $q$ in the positions specified by $W$. In all these vertices we check in their bitmap if there is an input vector that is orthogonal to $q$. If $k-x<1$ we ensure that the query time is sublinear in $n$. All in all, we obtain a solution that has the same query time and space complexities as the previous one using a different approach, as summarized in the following theorem:

\begin{theorem}
The BottomLevelsQueryGraph algorithm solves OV indexing with truly sublinear query time using approximately $\tilde{O}(n^{c - 0.3})$ space.
\end{theorem}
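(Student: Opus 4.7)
The algorithm is already laid out, so my proof will combine correctness with the space and query-time bounds, with the parameter choices $k\approx 1.3$ and $x\approx 0.3$ (with $k$ taken slightly below $1.3$ so that $k-x<1$) plugged in at the end.

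For correctness I would fix a query $q$ and let $\alpha_j=q|_W$. Any witness $u\in S$ with $u\perp q$ determines a distinguished vertex $v_{i^*}$ defined by $\alpha_{i^*}=\overline{u|_W}$, and $v_j$ is necessarily an ancestor of $v_{i^*}$ in $G$. I would split into two subcases. When $v_{i^*}$ lies in the bottom $x\log n$ levels, its contribution is recorded in the bottom bitmap at $v_{i^*}$, and the query reaches the relevant bitmap either by matching $v_j$ directly to a bottom vertex or by enumerating the boundary-layer vertices whose $W$-patterns are consistent with $\alpha_j$. When $v_{i^*}$ lies above the bottom levels, the exclusion rule leaves $u$ inside the upper list $L_{v_j}$ (because $v_j$ is an upper-level ancestor of $v_{i^*}$), so $u$ is detected either by a short-list scan or by the long-list bitmap attached to $v_j$.

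For the space bound I would add two contributions. The bottom bitmaps give $\tilde{O}(n^{c-k}\binom{k\log n}{x\log n})$, which by the Stirling estimate of Appendix~\ref{sec:approx_binomial} equals $\tilde{O}(n^{c-k+k\log k-x\log x-(k-x)\log(k-x)})$. For the upper lists, the crucial invariant I plan to invoke is that any surviving $u$ has $v_{i^*}$ above the bottom levels, hence at most $2^{(k-x)\log n}=n^{k-x}$ ancestors in $G$; so the total length of the upper lists is $n^{1+k-x}$, at most $n^{k-x}$ of them are long, and the long-list bitmaps cost $n^{k-x}\cdot n^{c-k}=n^{c-x}$. Balancing the two terms at $k\approx 1.3$, $x\approx 0.3$ gives $\tilde{O}(n^{c-0.3})$. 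The query time is then an $\tilde{O}(1)$ bitmap lookup in the bottom case, and a traversal of at most $n^{k-x}$ boundary vertices (each an $\tilde{O}(1)$ bitmap lookup or a sublinear short-list scan) in the upper case, both truly sublinear under $k-x<1$.

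The step I expect to be hardest is making the correctness argument precise when $v_{i^*}$ sits strictly inside the bottom layers rather than on the boundary: the bottom bitmaps are keyed per vertex by the exact pattern $u|_W=\overline{\alpha_i}$, yet the upper-case query only enumerates boundary vertices. My plan is to show that each such deep $v_{i^*}$ has a boundary-layer ancestor whose (aggregated) bitmap reports $u$'s contribution, and then to cross-check against the upper-list exclusion-rule invariant to ensure no witness slips through; this bookkeeping is where I expect most of the technical work to concentrate.
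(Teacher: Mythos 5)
Your proposal tracks the paper's own route closely: bottom-level bitmaps costing $\tilde{O}(n^{c-k}\binom{k\log n}{x\log n})$, the exclusion rule giving each remaining vector at most $n^{k-x}$ upper-list memberships and hence total upper-list length $n^{1+k-x}$, bitmaps for the at most $n^{k-x}$ long upper lists at cost $n^{c-x}$, Stirling for the binomial, and a balance at $k\approx 1.3$, $x\approx 0.3$ with $k-x<1$. The space and query-time bookkeeping matches the paper's.

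You have also put your finger on a genuine soft spot that the paper glosses over, namely the ``deep witness'' case. As literally written, the bottom bitmap at a vertex $v_i$ is keyed to the exact pattern $u|_W=\overline{\alpha_i}$, the exclusion rule removes every such $u$ from all upper lists, and the query phase (when $v_j$ is above the bottom levels) only probes boundary-level bitmaps; moreover the paper's stated selection rule ``$\alpha_i$ is orthogonal to $q$ in the positions specified by $W$'' is the wrong predicate---your correctness argument correctly treats the probed boundary vertices as the descendants of $v_j$, those with $\alpha_j$'s ones contained in $\alpha_i$'s ones. With these semantics a witness $u$ whose $W$-projection has strictly fewer than $x\log n$ ones sits at a strictly-interior bottom vertex $v_{i^*}$, is excluded from all upper lists, and is never probed. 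Your proposed repair---aggregate each boundary bitmap so that entry $\beta$ is set when some $u$ has $u|_W$'s ones contained in the zeros of $\alpha_i$ and $u|_{\overline W}\perp\beta$, then cross-check against the exclusion-rule invariant---is exactly the right fix and does not change the space bound, since only $\binom{k\log n}{x\log n}$ boundary vertices carry bitmaps. But you explicitly defer this bookkeeping to later, so the proposal is incomplete on precisely the step you yourself identify as hardest. The paper does not spell this out either, so this is not a misreading on your part, but a submitted proof should carry out the aggregation-and-invariant argument rather than postpone it.
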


\section{Space and Query Time Tradeoff for Solving OV indexing}
In all the solutions we presented so far we tried to minimize the space usage and still achieve a sublinear query time. However, obtaining a tradeoff between the space and query time would be of utmost interest. We know how to obtain constant query time by using $n^c$ space. But can we obtain, for example, $O(\sqrt{n})$ query time using just $n^{c-\epsilon}$ space for some $\epsilon>0$? In the first method we have suggested there is an inherent problem to achieve this as all lists can have more than $O(\sqrt{n})$ vectors. In the second and third solutions we can improve the query time by choosing larger $x$. However, as $x$ becomes $k/2$ the space of the data structure becomes $\tilde{O}(n^c)$. The following theorem demonstrates how to obtain any polynomial query time while consuming $O(n^{c-\gamma})$ space for some $\gamma>0$.

\begin{theorem}
For any $\epsilon>0$ there is a solution to OV indexing that its query time is $O(n^{\epsilon})$ and the space complexity is $O(n^{c-\gamma})$ for $\gamma>0$.
\end{theorem}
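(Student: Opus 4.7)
My plan is to extend the TopLevelsQueryGraph construction of Section~\ref{sec:second_solution} by promoting the long/short list threshold $\tau$ to a tunable parameter (in the original construction it was fixed at $\tilde\Theta(n)$). With this extra degree of freedom the query time of the modified data structure is $\tilde O(n^{k-x}\cdot\tau)$: a query whose projection $q[W]$ lies at layer $k' > x\log n$ traverses at most $n^{k-x}$ bottom-layer vertices, and at each vertex either scans a list of length at most $\tau$ or performs an $O(1)$ bitmap lookup. Setting $(k-x) + \log_n\tau = \epsilon$ already yields query time $\tilde O(n^\epsilon)$.

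The space then splits into three contributions. The top-level bitmaps contribute $\tilde O(n^{c-k}\binom{k\log n}{x\log n}) = \tilde O(n^{c - k(1-H(x/k))})$ by the Stirling estimate of Appendix~\ref{sec:approx_binomial}, where $H(p) = -p\log p - (1-p)\log(1-p)$ is binary entropy. The long-list bitmaps contribute $\tilde O((n/\tau)\cdot n^{c-k}) = \tilde O(n^{c-(k-1+\log_n\tau)})$, using that each input vector lies in exactly one bottom-level list. The short-list storage uses $\tilde O(n) = \tilde O(n^{c-(c-1)})$ bits, which is dominated whenever $c>1$. The total space is therefore $\tilde O(n^{c-\gamma})$ for any $\gamma < \min\bigl(k(1-H(x/k)),\ k - 1 + \log_n\tau,\ c-1\bigr)$, provided all three quantities are positive.

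The key parameter calculation is to choose $(k, x, \tau)$ so that both $k(1-H(x/k))$ and $k - 1 + \log_n\tau$ are strictly positive constants while honouring the query-time budget $(k-x) + \log_n\tau = \epsilon$ and the constraint $x \le k/2$ (which is what keeps $H(x/k) < 1$). For $\epsilon$ not too small this can be done directly, for instance by taking $k$ slightly larger than $1$ and $x$ slightly below $k/2$; for very small $\epsilon$ I would split queries according to the Hamming weight of $q[W]$, running a TopLevels instance for the ``light'' queries alongside a BottomLevels instance (from the preceding section) for the ``heavy'' queries, each with parameters tuned to satisfy the corresponding pair of inequalities. In either case the resulting $\gamma$ depends continuously on $\epsilon$ and is strictly positive.

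The main obstacle is exactly this parameter optimization: the three constraints interact non-trivially and force $\gamma$ to be rather small for small $\epsilon$, so one has to be careful to pin it down as a strictly positive constant rather than a vanishing $o(1)$. Beyond that, no new conceptual ingredient is required; everything else reduces to the already-developed query-graph machinery together with the binomial/entropy estimate of Appendix~\ref{sec:approx_binomial}.
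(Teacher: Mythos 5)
Your base construction cannot reach arbitrarily small $\epsilon$, and the fallback you sketch for small $\epsilon$ does not close the gap. With only top-level bitmaps plus a tunable threshold $\tau$, a medium query (one whose projection $q[W]$ has between $x\log n$ and $(k-x)\log n$ ones) must visit $n^{k-x}$ vertices of the query graph, so your query time is at least $n^{k-x}$. But your top-level space bound $\tilde O(n^{c-k(1-H(x/k))})$ is only non-trivial when $x<k/2$, and your long-list space bound forces $\log_n\tau>1-k$, which combined with your budget $(k-x)+\log_n\tau\le\epsilon$ gives $x>1-\epsilon$. Together with $x<k/2$ this forces $k>2(1-\epsilon)$, hence $k-x>k/2>1-\epsilon$, so $n^{k-x}>n^{1-\epsilon}$; this is only compatible with a query budget of $n^\epsilon$ when $\epsilon>1/2$. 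Taking $k<1$ does not escape this either, since then the long-list constraint $\log_n\tau>1-k$ alone already pushes $\tau$ above $n^{1-k}$, which again blows the query budget. In short, the triple $(k,x,\tau)$ has no feasible region once $\epsilon<1/2$.

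The paper's proof gets around exactly this by keeping \emph{both} the top-$x$ and bottom-$x$ bitmaps in a \emph{single} instance. The traversal from a middle-level vertex is then cut off from both sides: it starts at a vertex with at least $x\log n$ ones and stops as soon as it hits the bottom-$x$ boundary, so the number of visited vertices is $\binom{(k-x)\log n}{x\log n}$ (for $k/3<x<k/2$), not $n^{k-x}$. Crucially, $\binom{(k-x)\log n}{x\log n}\to n^{o(1)}$ as $x\to k/2$, which is why $k-x$ is allowed to stay above $1/2$ while the query time still goes to $n^{o(1)}$; you then only need the small per-vertex list bound $n^\delta$ on top of that. Your ``two separate instances'' fallback does not replicate this: a stand-alone TopLevels structure covers queries with $k'\le x_1$ in $O(1)$, a stand-alone BottomLevels structure covers queries with $k'\ge k-x_2$ in $O(1)$, but since both need $x_1,x_2<k/2$ you have $x_1+x_2<k$, leaving a band of medium-weight queries that neither instance answers quickly, and for those the single-sided traversal bound $n^{k-x}$ returns. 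The missing ingredient is precisely the two-sided pruning inside one query graph, together with the observation that the pruned traversal count is governed by the binomial $\binom{(k-x)\log n}{x\log n}$ rather than the power $n^{k-x}$.
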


\begin{proof}
The idea is to combine the second and third solutions. We can save bitmaps for both the $x$ top levels and the $x$ bottom levels of the query graph using $\tilde{O}(n^{c-k}\binom{k\log{n}}{x\log{n}})$ space. Then, for every vertex that is not in the $x$ top or bottom levels we do the same as in the second solution - save a bitmap for every node whose list is of length $\theta(n^{\delta})$ or more for some $\delta>0$. The total cost of these bitmaps is $\tilde{O}(n^{c-k+1-\delta})$. When we get a query vector $q$ that matches a vertex $v_i$ in our query graph. If $v_i$ is on the $x$ top or bottom levels, we just check the right entry in the bitmap of $v_i$. Otherwise, we start a traversal from $v_i$ to all the vertices that are reachable from it except those in the $x$ bottom levels. The number of vertices we visit is at most $\binom{(k-x)\log{n}}{x\log{n}}$ if $k/3 < x$. This is approximately $\tilde{O}(n^{(k-x)\log{k-x}-x\log{x}-(k-2x)\log{(k-2x)}})$. It is easy to verify that as $x$ gets close to $k/2$ the exponent of this expression is very close to 0. Therefore, the total query time is  $\tilde{O}(n^{\delta+(k-x)\log{(k-x)}-x\log{x}-(k-2x)\log{(k-2x)}})$ as the query time in each vertex we visit is at most $n^{\delta}$. By choosing suitable value of $k \geq 1$, $x<k/2$ and $\delta>0$, we can obtain a query time of $\tilde{O}(n^{\epsilon})$ for any $\epsilon>0$ using a data structure that consumes $\tilde{O}(n^{c-\gamma})$ space for some constant $\gamma>0$.
\end{proof}

\section{The Reporting Version of OV indexing}
In the reporting version of OV indexing, given a query vector $q$ we are required not just to decide if there is a vector in $S$ that is orthogonal to $q$, but rather we are required to report all input vectors in $S$ that are orthogonal to $q$.

To solve this version we can use the same methods as we have described for the decision version. However, the only part of these solutions that does not support reporting is the use of bitmaps. Using a bitmap we can answer the query quickly if there is an input vector that is orthogonal to our query vector, but we are unable to discover the list of input vectors that are orthogonal to the query if there are such vectors.
The following lemma demonstrates how to construct a data structure that uses almost the same space as a bitmap, but supports efficient reporting.

\begin{lemma}
Given $n$ $c\log{n}$-length boolean vectors, there is a data structure that uses $\tilde{O}(n^c)$ preprocessing time and upon receiving a query vector $v$ report on all $t$ input vectors that are orthogonal to $v$ in time $O(t\log{n})$
\end{lemma}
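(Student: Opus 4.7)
The plan is to augment a trie on the input vectors with per-subtree orthogonality bitmaps, and then guide the reporting with a DFS of the augmented trie.

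First I would build the standard binary trie $T$ on the $n$ input vectors in $\tilde{O}(n)$ time; it has depth $c\log n$ and at most $\min(n,2^d)$ nodes at any depth $d$. Then, to each trie node $y$ at depth $d$, I attach a bitmap $B_y$ of length $2^{c\log n - d}$ with $B_y[w]=1$ iff the subtree rooted at $y$ contains an input vector whose suffix in the last $c\log n - d$ positions is orthogonal to $w$. The $B_y$ are computed bottom up: for a leaf the single entry is $1$, and for an internal node $y$ with children $y_0$ (the $0$-child) and $y_1$ (the $1$-child), the recurrence on a suffix $w$ of length $c\log n - d$ with leading bit $b$ and remainder $w'$ sets $B_y[w]$ to $B_{y_0}[w']\lor B_{y_1}[w']$ when $b=0$ and to $B_{y_0}[w']$ when $b=1$, treating any missing child as contributing $0$. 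Each entry is filled in constant time, and summing $|B_y|$ over all trie nodes and splitting at the crossover depth $d=\log n$ shows that the total preprocessing time and space are $\tilde{O}(n^c)$.

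To answer a query $v$, I would DFS $T$ from the root with two pruning rules: at a node $y$ of depth $d$, first look up $B_y$ at index $v_{d+1}\cdots v_{c\log n}$ in $O(1)$ time and return if it is $0$; if $y$ is a leaf, report the input at $y$; otherwise recurse into the $0$-child (when it exists) and, if $v_{d+1}=0$, also into the $1$-child (when it exists). The DFS rule forces every visited trie leaf to satisfy $u\cdot v = 0$, because a $1$-bit on the path is only taken when the matching $v$-bit is $0$, and the recurrence for $B_y$ keeps the pruning tight: if the queried entry of $B_y$ is $1$, then at least one child that the DFS rule permits also has its queried entry equal to $1$, so no visited internal node is a dead end. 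Therefore the set of visited nodes is exactly the union of the root-to-leaf paths of the $t$ orthogonal leaves, which has size at most $t\cdot(c\log n + 1) = O(t\log n)$; since each visited node is handled in $O(1)$, the total query time is $O(t\log n)$.

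The step I expect to require the most care is the space bound, since at shallow depths $d\le\log n$ the factor $2^{c\log n - d}$ is huge and must be balanced against $\min(n,2^d)=2^d$, while at deep depths the geometric series in $2^{c\log n - d}$ has to be shown to collapse. The other point worth careful verification is the tightness of the bitmap pruning: this follows from the case split on the leading suffix bit in the $B_y$ recurrence and is precisely what forces every visited internal node to lie on a root-to-leaf path of an orthogonal input, which in turn yields the $O(t\log n)$ query bound.
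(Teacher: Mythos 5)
Your proof takes essentially the same approach as the paper's: a prefix tree whose node at depth $d$ stores a bitmap over $(c\log n - d)$-bit suffixes recording which suffix queries are satisfied by some input in that subtree, plus a DFS guided by those bitmaps, with the same space accounting ($n^c$ per level times $O(\log n)$ levels). Two places where you are more careful than the paper's write-up. First, you build the trie on the $n$ input vectors rather than the paper's complete binary tree with $n^c$ leaves; this is tighter, and your split of the sum $\sum_d \min(n,2^d)\,2^{c\log n - d}$ at depth $\log n$ correctly recovers the same $\tilde O(n^c)$ bound. Second, and more importantly, you make the pruning rule explicit --- descend into the $1$-child only when the corresponding query bit is $0$ --- and you note that the bitmap recurrence makes this pruning tight, so every visited node lies on the root-to-leaf path of some orthogonal input, which is exactly what gives the $O(t\log n)$ bound. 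The paper's text instead says to examine ``both children of the root'' and recurse whenever the child's bitmap bit is $1$; read literally this is incorrect (the $1$-child can have a nonzero suffix-bitmap entry even when the query's bit at that position is $1$, so a non-orthogonal input would be reported) and it also forfeits the $O(t\log n)$ bound on the number of visited nodes. So your explicit pruning is a genuine and necessary correction, not merely a stylistic refinement, even though the overall plan is identical.
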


\begin{proof}
In order to report these vectors we build the following data structure. The data structure is composed of a complete binary tree with $c\log{n}$ levels (we consider the root of the tree as level number $0$). Every edge in the tree has a label. The label of an edge to the left child is 0 and the label to the right child is 1. Moreover, in each tree node in the $i$th level of the tree we save a bitmap of size $n^{c}/2^i$. The $j$th bit of a bitmap in the $i$th level of the tree represents the $j$th boolean vector in the lexicographical order of all $(c\log{n}-i)$-length boolean vectors. We denote this vector by $v_{i,j}$. The $j$th bit of the a bitmap in a node $x$ in the $i$th level of the tree will be 1 if there is an input vector that its prefix has the same values as in the string obtained by concatenating all the labels of the edges on the path from the root to $x$ and that its last $(c\log{n}-i)$ elements are orthogonal to $v_{i,j}$. Otherwise, the value of this bit will be 0. The space required by this data structure is clearly $\tilde{O}(n^c)$, as we have $c\log{n}$ levels, and there are $2^i$ bitmaps in the $i$th level each of them of size $n^c/2^i$.

Upon receiving a query vector $q$, the bit in the bitmap of the root that represents this vector is observed. If it is 1, then we know that there is at least one input vector that is orthogonal to $q$. We examine, for both children of the root, the bit that represents the last $c\log{n}-1$ bits of $q$. If the value of this bit is 1, we recurse on this child node. Otherwise, we stop searching in this path. When we get to a leaf $x$ and the value of the single bit in this leaf is 1, the vector that is given by concatenating the bits on the labels of the edges along the path from the root to $x$ is an input vector that is orthogonal to $q$. Equivalently, we can save a pointer from every leaf having 1 in the single bit it contains to the input vector it represents and use this pointer for reporting. The total query time is $O(t\log{n})$, where $t$ is the number of input vectors that are orthogonal to our query vector.
\end{proof}

\subsection{Improving The Query Time}
We can remove the dependency on $n$ in the query time as shown by the following theorem.

\begin{theorem}
Given $n$ $c\log{n}$-length boolean vectors, there is a data structure that uses $\tilde{O}(n^c)$ space and upon receiving a query vector $v$ report on all $t$ input vectors that are orthogonal to $v$ in $O(t)$ time.
\end{theorem}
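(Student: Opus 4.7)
The plan is to augment the tree from the lemma with a \emph{skip table} at every node, so that during a query we can jump in $O(1)$ time from the root, or from a current branching node, directly to the next node of the live search subtree that either reports a vector or actually branches, bypassing the long degree-one paths that produced the $\log n$ factor in the lemma. I would keep the complete binary tree of depth $c\log n$ and the bitmaps of size $n^c/2^i$ at level-$i$ nodes exactly as in the lemma. At each node $x$ at level $i$ I would add a table $B_x$ of size $n^c/2^i$, indexed by query suffixes $q \in \{0,1\}^{c\log n - i}$. The entry $B_x[q]$ stores a pointer to the closest descendant $y$ of $x$ on the path traced by $q$ such that $y$ is either a leaf or a \emph{branching node} for $q$ (both of $y$'s children have their bitmap bit set for the corresponding shorter suffix of $q$); the entry is null when $x$'s own bitmap bit for $q$ is $0$. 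The $B$-tables are populated in a bottom-up preprocessing pass: at a leaf $B_x[q]=x$; at an internal $x$ with a unique live child $y$ for $q$, $B_x[q]=B_y[q']$ for the relevant shorter suffix $q'$; at an internal $x$ with two live children for $q$, $B_x[q]=x$.

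For a query vector $q$ I first inspect the root's bitmap bit for $q$; if it is $0$ I return immediately in $O(1)$ time. Otherwise I follow $B_{\text{root}}[q]$; if the destination is a leaf I report the input vector it represents, and if the destination is a branching node $y$ I recurse on both children of $y$ via their $B$-tables with the appropriate shorter suffixes of $q$. By construction every node visited is either a reporting leaf (one of the $t$ orthogonal input vectors) or a true branching node of the subtree induced by all root-to-report-leaf paths, and a binary tree with $t$ leaves has at most $t-1$ such internal branching nodes, so at most $O(t)$ nodes are visited in total. Each visit performs only $O(1)$ work (a bitmap lookup and at most two pointer dereferences), giving the desired $O(t)$ query time.

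For space, the table $B_x$ has exactly the same number of entries as the bitmap at $x$, so summing over the whole tree the total number of $B$-entries matches the total bitmap size, which is $\tilde{O}(n^c)$ by the lemma. Storing pointers rather than single bits inflates this by only an $O(\log n)$ factor, which is absorbed into the $\tilde{O}$ notation, so the combined data structure still fits in $\tilde{O}(n^c)$ space. The main obstacle I anticipate is precisely this space accounting: a first attempt at output-sensitive reporting might try to precompute, at every node, a list of all live children for every query suffix, but the number of such (node, query, live child) triples can blow up to roughly $n^{c+1}$ in the worst case. The skip-table formulation sidesteps this because it stores only \emph{one} pointer per (node, query) pair, and the recursion at a branching node immediately re-enters the skip tables of the two children with a shorter suffix --- each such re-entry is charged to one of the $O(t)$ branching-or-leaf visits, so both space and time stay within budget.
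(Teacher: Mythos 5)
Your proposal is correct and achieves the stated bounds, but via a genuinely different construction than the paper's. Both proofs share the same underlying idea — jump directly to the next branching-or-leaf node and thereby pay only per \emph{output} element rather than per tree level — but they implement it quite differently. The paper \emph{discards} all bitmaps except the one at the root, and replaces the rest of the lemma's structure with an explicit compressed tree: one node for each 1-bit in the root bitmap and for each bit reached by a skip pointer, with an edge whenever one such bit points to another. Its space analysis is then an amortization over the paths that the edges contract: each contracted path consumes 1-bits in lower bitmaps that are ``used up'' by exactly one edge, so the number of nodes plus edges is charged to the total number of 1-bits, which is $\tilde{O}(n^c)$, plus $n^c$ for the retained root bitmap. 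You instead \emph{keep} the entire tree with all its bitmaps and overlay a skip table $B_x$ at every node, with one pointer per bitmap entry; your space bound falls out of a one-line counting argument (the $B$-tables collectively have exactly as many entries as the bitmaps, up to the $O(\log n)$ word size absorbed by $\tilde{O}$), at the cost of retaining all of the lemma's bitmaps explicitly. Your query analysis — at most $t$ leaves and $t-1$ per-query branching nodes visited, $O(1)$ work per visit — is a cleaner statement of the same fact the paper proves. Two minor wording slips worth fixing: the phrase ``the closest descendant $y$ of $x$ on the path traced by $q$'' should include $x$ itself (your bottom-up rule correctly sets $B_x[q]=x$ when $x$ branches, contradicting ``descendant''), and it is really the maximal degree-one chain of the live subtree, not a path determined by $q$ alone. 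Also, your closing worry that storing all live children per (node, suffix) pair ``blows up to roughly $n^{c+1}$'' is not right — there are at most two live children per entry, so that variant would still be $\tilde{O}(n^c)$; the actual obstacle the skip table removes is the $O(\log n)$ traversal depth, not a space blow-up. None of these affect the correctness of the construction or the $\tilde{O}(n^c)$ space / $O(t)$ time bounds.
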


\begin{proof}
From every bit that has value 1 and represents a vector $v$ in the bitmap of the root node, we create a pointer to a bit in a bitmap of a descendant node if it satisfies the following conditions: (1) The bit value is 1 and it represents a vector $u$ that is a suffix of $v$. (2) Both children of this descendant node have 1 in the bit that represent a vector $w$ that is a suffix of $u$. (3) There is no bit in a bitmap of some other descendant node that satisfies both previous conditions and represents a vector $u'$ such that $u$ is a suffix of $u'$. If the bit is in a leaf node then we need to satisfy only the first and third conditions. After we finish adding pointers from bits in the bitmap of the root node, we recurse on every bit that we point to unless it is a bit in a leaf. At the end of this process, we use the pointers that we have created to construct the following data structure.

We keep only the bitmap of the root node. We call this bitmap $B$. For every bit whose value is 1 in $B$ we create a node. Moreover, we create a node for every bit that is pointed to by a pointer that we have created previously. Therefore, each node we create \emph{represents} a specific bit in some bitmap. We also create an edge $(u,v)$ between two nodes $u$ and $v$ if there is a pointer from the bit that node $u$ represents to the bit that node $v$ represents. We keep pointers from the ones in $B$ to the nodes that represents them. We also have pointers from leaf nodes to the input vectors they represent.

Using this data structure queries can be answered more quickly. Specifically, upon receiving a query vector $q$, we check the value of the bit that corresponds to this vector in the bitmap of the data structure. If the value is 0 we are done - no input vector is orthogonal to $q$. Otherwise, we follow the pointer from this bit to the node that represents it and use any tree traversal algorithm to get to all the leaves reachable from this node. Then, the pointers in these leaves leads us to all the input vectors that are orthogonal to $q$. It is easy to verify that the query time using this data structure is just $O(t)$.

Now, we show that the size of the data structure is still $\tilde{O}(n^c)$. The number of nodes we created is bounded by the number of the ones in all the bitmaps of the previous structure. Moreover, every edge $(u,v)$ in the data structure corresponds to a path of bits whose value is 1 that start at the bitmap that contains the bit that is represented by $u$ and ends at the bitmap that contains the bit that is represented by $v$. If we look at some specific node $v$ and all the edges that point to $v$ and their corresponding paths, they create a binary tree whose inner nodes, except the root, represent bits that their value is 1 and do not correspond to any node in our data structure. Moreover, these bits do not belong to any other binary tree. Otherwise, they must have two children 1 bits and become nodes in our structure. Therefore the total number of edges in our structure is $O(\ell)$, where $\ell$ is the number of bits with value 1 in all bitmaps. Consequently, the size of the new data structure is $\tilde{O}(n^c)$ - the same bound as for the previous structure, but the query time is just $O(t)$ which is optimal.
\end{proof}

We can plug in the data structure from the previous theorem into any of the three solutions for OV indexing and get solutions for the reporting version of OV indexing that have the same space usage (up to logarithmic factors) and just an additive $O(t)$ to the query time.

\section{Reducing OV indexing to SetDisjointness}
In this section we present a connection between OV indexing and the problem of SetDisjointness. In the problem of SetDisjointness we are given $m$ sets $S_1,S_2,...,S_m$ such that the total number of elements in all sets is $N$ and after preprocessing them we need to answer queries of the following form: given a pair of indices $(i,j)$, decide whether $S_i \cap S_j$ is empty or not. The problem can be generalized to $k$-SetDisjointness in which we are given as a query a $k$-tuple $(i_1,i_2,...,i_k)$ and we are required to answer if the intersection $S_{i_1} \cap S_{i_2} \cap ... \cap S_{i_k}$ is empty or not. The SetDisjointness problem was the first problem used to show conditional lower bounds on space complexity(see~\cite{CP10,PRT12,DSW12,PR14}). Therefore, it should be interesting to see the connection between our OV indexing problem and the fundamental problem of SetDisjointness. Other problems connected to SetDisjointness are discussed in~\cite{GKLP17}. Currently, the best known space-query time tradeoff for $k$-SetDisjointness is $S \times T^k = O(N^k)$, where $S$ is the space complexity and $T$ is the query time~\cite{CP10a,GKLP17}.

We begin by presenting a simple reduction from OV indexing to $k$-SetDisjointness for $k=c\log{n}$. Given an instance of OV indexing with $n$ $c\log{n}$-length boolean input vectors we can create an instance of $k$-SetDisjointness in the following way. We create $c\log{n}$ sets. The set $S_i$ contains all the vectors that have 0 in their $i$th element. Then, given a query vector $q$ that has ones in the elements whose indices are $(i_1,i_2,...,i_k)$ all that we need in order to answer this query is to verify if the intersection $S_{i_1} \cap S_{i_2} \cap ... \cap S_{i_k}$ is empty or not. If it is empty then we know that there is no input vector that has zeroes in all the position of the ones in $q$, which means that no input vector is orthogonal to $q$. Otherwise, there is an input vector which is orthogonal to $q$.

We would like to show this reduction to other values of $k$, especially small and constant. The idea is to use the first solution that we have suggested to obtain the following result:

\begin{theorem}
There is a reduction from OV indexing to $k$-SetDisjointness that can be used to solve OV indexing with truly sublinear query time and $O(n^{c-\gamma})$ space for some $0<\gamma<1$.
\end{theorem}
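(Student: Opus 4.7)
The plan is to embed the DivideByOnes algorithm of Section~\ref{sec:first_solution} into a $k$-SetDisjointness framework, replacing the dense bitmap that DivideByOnes attaches to each long list with a single $k$-SetDisjointness query. I fix a small integer $k \geq 2$ and partition every input vector into $k+1$ blocks of length $\frac{c}{k+1}\log n$; then I run DivideByOnes with this block size. Vectors with very few ones are kept in the auxiliary set $S_1$ and scanned brute force at query time, while every remaining vector $v$ is placed in the list $A_{j^{\star}(v)}[i]$ for each $i$ orthogonal to its heaviest block, exactly as in Section~\ref{sec:first_solution}.

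For every long list $L = A_j[i]$ I build a dedicated $k$-SetDisjointness instance: for each of the $k$ remaining blocks $j' \neq j$ and each value $i'$ of that block, the set $T_{j',i'}$ contains those vectors of $L$ whose $j'$-th block is orthogonal to $i'$. A vector of $L$ is orthogonal to a query $q$ exactly when $\bigcap_{j' \neq j} T_{j', q_{j'}} \neq \emptyset$, so one $k$-SetDisjointness query resolves $L$. The universe of this instance has size at most $|L| \cdot k \cdot n^{c/(k+1)}$, and I concatenate all per-list instances into a single $k$-SetDisjointness instance over disjoint universes to obtain the single reduction asserted by the theorem.

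Plugging the resulting total $N$ into the best known tradeoff $S \times T^k = O(N^k)$ and choosing $T = n^{1-\epsilon}$ yields, after a routine calculation that also uses the DivideByOnes bound on the number of long lists, total SetDisjointness space $O(n^{c - c_1/(k+1) + O(\epsilon + \delta)})$, where $\delta$ is the threshold between short and long lists. Adding the short-list storage $O(n^{1 + (c - c_1)/(k+1)})$ and $|S_1| = \tilde O(n^{1-\delta'})$ keeps the total within $O(n^{c-\gamma})$ for a suitable $\gamma > 0$, and the query time is at most $k+1$ SetDisjointness queries plus the $S_1$ and short-list scans, all truly sublinear in $n$.

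The main obstacle will be the parameter tuning. Because the SetDisjointness slack is amplified to the $k$-th power while the DivideByOnes gain is only $c_1/(k+1)$, the parameters $k$, $\epsilon$, $\delta$, $c_1$ must be coordinated; in particular, for values of $c$ just above $1$ the short-list storage becomes the binding constraint and forces $k$ to grow like $1/(c-1)$. Verifying that a positive $\gamma$ survives both the SetDisjointness exponent and the short-list exponent simultaneously reduces to the two inequalities $c_1/(k+1) > \delta + k\epsilon$ and $kc + c_1 > k+1$, both of which are attainable for every $c > 1$ by appropriate constant choices of $k$, $\epsilon$ and $\delta$.
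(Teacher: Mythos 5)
Your proposal takes a genuinely different route from the paper's. The paper abandons the ``heaviest block'' placement of DivideByOnes altogether: it places every input vector in every array, builds one \emph{global} $k$-SetDisjointness instance (one set per entry of every array, one SetDisjointness query per OV query), and controls the total incidence count $N$ by choosing the index partition $W$ \emph{at random}, showing that the expected number of sets containing a vector with $\Omega(\log n)$ ones is truly sublinear, so $N = n^{c/k+1-\delta}$. You instead keep the deterministic heaviest-block placement and replace the bitmaps attached to long lists with per-list $k$-SetDisjointness instances; randomness plays no role, the OV query is simulated by $k+1$ SetDisjointness queries, and the slack comes from the fact that vectors outside $S_1$ have at least $\frac{c_1}{k+1}\log n$ ones in the block that is fixed.

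There is, however, a real gap in the way you account for space. After concatenating the per-list instances into a single instance over disjoint universes, you ``plug the resulting total $N$ into $S\times T^k = O(N^k)$''. But $\bigl(\sum_L N_L\bigr)^k/T^k$ and $\sum_L N_L^k/T^k$ differ enormously, and your claimed exponent only follows from the second quantity. Concretely, $N=\sum_L N_L \le k\,n^{\,c/(k+1)}\sum_L|L| \le k\,n^{1+(2c-c_1)/(k+1)}$, and plugging this into $N^k/T^k$ with $T=n^{1-\epsilon}$ gives $k^k n^{k(2c-c_1)/(k+1)+k\epsilon}$, which is below $n^c$ only when $k<c/(c-c_1)$; since $c_1\le c/2$, this forces $k=1$ and the argument collapses. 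The fix is to keep the per-list structures separate and bound $\sum_L N_L^k \le \bigl(k\,n^{c/(k+1)}\bigr)^k\sum_L|L|^k \le \bigl(k\,n^{c/(k+1)}\bigr)^k\,n^{k-1}\sum_L|L|$, using $|L|\le n$; this does give $\sum_L S_L = O\!\bigl(k^k n^{c-c_1/(k+1)+k\epsilon}\bigr)$ as you claimed. You should also record the constraint $\sum_L N_L < n^c$, i.e.\ $k+1 > (2c-c_1)/(c-1)$, which is \emph{strictly stronger} than your short-list constraint $kc+c_1>k+1$ (equivalently $k>(1-c_1)/(c-1)$) and is the one that actually forces $k=\Theta(1/(c-1))$ as $c\to 1^+$. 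With these repairs the approach works, but as written the decisive step of the space analysis is not justified.
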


\begin{proof}
In the first solution called DivideByOnes, we have a set $S_1$ of all input vectors with a small amount of ones (less than $c_1 \log{n}$) and arrays $A_1,A_2,...,A_c$ such that each of them corresponds to $\log{n}$ bits of each input vector (to simplify the discussion we consider the basic case of Section~\ref{sec:first_solution} in which every part is of length $\log{n}$ and $c$ is an integer. However, the results here can be adapted to handle the general case). We are guaranteed that for every input vector $v$ there is at least one part of $\log{n}$ bits that the number of ones in it is at least $\frac{c_1}{c}\log{n}$. The array that corresponds to this part is the only one in which we place $v$. Now, instead of placing vector $v$ only in this specific array we would like to place it and every other input vector in all arrays. Then, for each entry in every array we create a set containing all the vectors in the list at that entry. This sets form an instance of $k$-SetDisjointness where $k=c$. Given a query vector $q$ we can answer the query as follows: Split $q$ into $c$ parts of length $\log{n}$. Find the entry that corresponds to the value of $q$ in each of these parts. Intersect all the sets that correspond to these entries. If the intersection is empty there is no orthogonal input vector to our query vector. Otherwise, there is at least one input vector that is orthogonal to $q$.

The number of elements in all sets can be $n^2$, as each vector can occur in $O(n)$ entries in all arrays. Therefore, even for $k=2$ we will need at least $n^2$ space (linear in the number of elements) and the query time will be linear in $n$. This query time is trivial, and we can improve it by consuming more than $n^2$ space. However, this is also trivial as we can save a lookup table instead.

To improve this tradeoff we reduce the number of elements in the sets. In the solution that we have described we save the set $S_1$ of input vectors with at most $c_1 \log{n}$ ones. However, when we partition the vectors there is no guarantee on the number of ones in each part. Consequently, input vectors can appear in many of the sets for some part. To prevent that we wish to find a partition of the input vectors such that the number of ones in each part will be more or less the same. It turns out that if we take a random partition then the number of sets that contains a specific vector $v$ is expected to be truly sublinear. Let $X_j$ be a random variable representing the number of sets that contain an input vector $v_j$. We count only sets that correspond to lists in an array for the same part of the vector. The $\log{n}$ positions of the bits of that part are chosen randomly (we note that there is no difference in the analysis of different parts as the partition is random). Let $x\log{n}$ be the number of ones in $v_j$. There are $\binom{c\log{n}}{\log{n}}$ options to choose $\log{n}$ bits out of the total $c\log{n}$ bits of $v_j$. The number of options to choose $\log{n}$ bits such that the number of zeros is exactly $i$ is $\binom{(c-x)\log{n}}{i}\binom{x\log{n}}{\log{n}-i}$. Therefore, we have $\Pr(X_j=2^i) = \frac{\binom{(c-x)\log{n}}{i}\binom{x\log{n}}{\log{n}-i}}{\binom{c\log{n}}{\log{n}}}$, as $v_j$ appears in $2^i$ sets if it has $i$ zeroes in the part they represent. Consequently, the expected value of $X_j$ is $\mathbf{E}(X_j) = \Sigma_{i=\log{n}-x\log{n}}^{\log{n}}\frac{\binom{(c-x)\log{n}}{i}\binom{x\log{n}}{\log{n}-i}2^i}{\binom{c\log{n}}{\log{n}}}$. If $i=\log{n}-y$ such that $y=\Theta(\log{n})$ then the value of $2^i$ is subpolynomial. Otherwise, if $y=o(\log{n})$, then the value of $\binom{x\log{n}}{\log{n}-i} = \binom{x\log{n}}{y}$ is subpolynomial while $\frac{\binom{(c-x)\log{n}}{\log{n}-y}}{\binom{c\log{n}}{\log{n}}}=O(1/n^\epsilon)$ for some $\epsilon>0$. Therefore, each of the elements in the sum of $\mathbf{E}(X_j)$ is truly sublinear. Moreover, there are at most $\log{n}$ elements in this sum. Consequently, $\mathbf{E}(X_j)$ is truly sublinear. All input vectors not in $S_1$ have at least $c_1\log{n}$ ones, so as we have proven the expected number of sets that contain each one of them is truly sublinear. By linearity of expectation we get that the expected total number of elements in all sets for some part will be truly subquadratic. As the number of parts is constant, the total number of elements in all sets is truly subquadratic.

Following the above discussion, there is a reduction from OV indexing to $k$-SetDijointness such that the total number of elements is subquadratic. Similar analysis can be done for the general case in which we partition each vector to parts that their size is not $\log{n}$. Therefore, if we partition each vector to $k$ equal parts (in case $k$ does not divide $c$ there might be parts that have one more bit than the others) then the number of lists in each array is $n^{c/k}$. Consequently, if every input vectors occurs in all lists the total number of elements in all sets created by our reduction is $N=n^{c/k+1}$. We can plug this value of $N$ in the tradeoff between the query time and space for $k$-SetDisjointness which is $S \times T^k = O(N^k)$ and conclude that to obtain truly sublinear query time for OV indexing following our reduction no less than $n^c$ space is needed. However, by using a random partition and following a similar analysis to that of the specific case considered before (where each part is of size $\log{n}$), the total number of elements in all sets is expected to be just $N=n^{c/k+1-\delta}$ for some $\delta>0$. Now, if we plug this value of $N$ in the tradeoff between the query time and space for $k$-SetDisjointness we obtain our result.
\end{proof}

\section{OV indexing for Random Input}\label{sec:random_ov}
The solution to OV indexing that we have described in Section~\ref{sec:second_solution} is limited by the tradeoff between the bitmaps for the lists in the top levels of the query graphs (lists in vertices $v_i$ such that $\alpha_i$ has a small number of ones) and the bitmaps for long lists in the bottom levels of the query graph. Therefore, we may improve the solution by making the lists in the bottom levels short, as for short lists we only save the elements themselves. We also note that we can also benefit from making the lists in the bottom levels very long, since their number is small. Consequently, the costly lists are those that are not too short and not too long.

In the solution we have presented in Section~\ref{sec:second_solution}, we pick a set $W$ of the indices for the query graph. Our solution works for any choice of $W$, but the question is whether there is a choice of $W$ that will make the list shorter or longer, so we can utilize it for a more compact solution to OV indexing. In the following lemma we show that for random input vectors that are uniformly distributed the probability for choosing $W$ such that there are lists that are not short is small.

\begin{lemma}
The size of every list $L_i$ in the query graph of the second solution of OV indexing is at most logarithmic w.h.p. for any choice of $W$ (the set of $k\log{n}$ indices) on random input vectors, where $k \geq 1$.
\end{lemma}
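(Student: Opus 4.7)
The plan is to reduce each list size to a binomial tail probability and then union bound over vertices of the query graph and over choices of $W$.

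First I would observe that, by construction of the second solution, an input vector $v$ lies in $L_i$ precisely when its restriction to the $k\log n$ coordinates in $W$ equals $\overline{\alpha_i}$. When the input vectors are independent and uniform in $\{0,1\}^{c\log n}$, this event occurs independently for each input vector with probability $2^{-k\log n}=n^{-k}$. Hence $|L_i|$ is distributed as a binomial $\mathrm{Bin}(n,n^{-k})$ random variable, whose expectation is at most $n^{1-k}\le 1$ for every $k\ge 1$.

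Next I would apply the standard binomial tail bound
\[
  \Pr[\,|L_i|\ge t\,]\;\le\;\binom{n}{t}n^{-kt}\;\le\;\left(\frac{e\,n^{1-k}}{t}\right)^{t}\;\le\;\left(\frac{e}{t}\right)^{t}.
\]
Setting $t=\alpha\log n$ for a sufficiently large constant $\alpha$ forces the right-hand side to decay as $n^{-\omega(1)}$, faster than any fixed polynomial in $1/n$.

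A union bound finishes the argument. For any fixed $W$ the query graph has $2^{k\log n}=n^{k}$ vertices, and there are at most $\binom{c\log n}{k\log n}\le n^{c}$ admissible sets $W$, so the probability that some $L_i$ for some $W$ exceeds $\alpha\log n$ is at most $n^{c+k}\cdot n^{-\omega(1)}=o(1)$. This gives the lemma with high probability, uniformly over $W$. The only point that needs any attention is making the super-polynomial Chernoff decay dominate the polynomial factor $n^{c+k}$ contributed by the union bound; this is automatic once $\alpha$ is taken to be any fixed constant and $n$ is large enough, so no delicate parameter tuning is required.
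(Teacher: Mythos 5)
Your proposal is correct and follows essentially the same route as the paper's proof: both observe that a fixed vector lands in a fixed list with probability $n^{-k}$, bound $\Pr[|L_i|\ge \ell]$ via a union over $\ell$-element subsets, and then union over the $\binom{c\log n}{k\log n}\le n^c$ choices of $W$. The only cosmetic difference is that the paper unions over subsets that ``all agree'' (absorbing the $n^k$ vertex count into the per-subset probability $2^{-(\ell-1)k\log n}$), whereas you union over the $n^{k}$ vertices explicitly and pay $n^{-k\ell}$ per subset; these give numerically identical bounds, and your binomial-tail/Chernoff framing is simply a cleaner way to package the same calculation.
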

\begin{proof}
Let $h_W:S \rightarrow \{0,1\}^{k\log{n}}$ be a function that maps a $c\log{n}$-length boolean input vector $v \in S$ to a $k\log{n}$-length vector that contains only the elements of $v$ in the indices specified by $W$. The order of the elements is preserved. There are $\binom{c\log{n}}{k\log{n}}$ possibilities for choosing $W$. We consider all sets $L$ that are subset of $S$ and all their elements are mapped by $h_W$ to the same value (recall that each list $L_i$ in the query graph is also created by mapping the elements from $S$ using $h_W$ to some specific value). We want to analyse the probability that there exists $W$ such that there is a list $L$ of length at least $\ell$. This probability is the same as the probability that there exists $W$ such that there is a list $L$ of length \emph{exactly} $\ell$. This is trivial, since if there is a set $L$ whose length is larger than $S$ and all its elements are mapped to the same value by $h_W$ then we can take any subset of it of size exactly $\ell$ and all the elements in it are guaranteed to be mapped to the same value. The number of subsets of $S$ of length $\ell$ is $\binom{n}{\ell}$. Therefore, we have $\Pr[\exists W \subseteq [k\log{n}] \quad \exists L \subseteq S \quad |L| = \ell \quad \forall x,y \in L \quad h_W(x)=h_W(y)] \leq \binom{c\log{n}}{k\log{n}}\binom{n}{\ell}\Pr[\forall x,y \in L \quad |L| = \ell \quad h_W(x)=h_W(y)]$. To conclude our analysis we need to calculate the probability that all vectors in $L$ have the same elements in all the position specified by $W$. Let $S=\{v_1,v_2,...,v_n\}$ be the set of input vectors and let $v_i=(v_i^1,v_i^2,..., v_i^{c\log{n}})$ for $1 \leq i \leq n$. Moreover, we denote the elements of the set $W$ by $i_1,i_2,...,i_{k\log{n}}$. Some $\ell$ vectors $v_{j_1},v_{j_2},...,v_{j_{\ell}}$ will be mapped by $W$ to the same list $L$ if $v_{j_1}^{i_m} = v_{j_2}^{i_m} = ... = v_{j_\ell}^{i_m}$ for all $i_m \in W$. Therefore, for each $i_m \in W$ we have $\ell-1$ independent equations. As there are $k\log{n}$ elements in $W$ the total number of independent equations is $(\ell-1)k\log{n}$. The probability that $v_{j_a}^{i_m} = v_{j_b}^{i_m}$ for some $a,b$ and $m$ is exactly $1/2$ if the input vectors are random. Consequently, $\Pr[\forall x,y \in L \quad |L| = \ell \quad h_W(x)=h_W(y)] = 2^{-(\ell-1)k\log{n}}$. From this we get that $\Pr[\exists W \subseteq [k\log{n}] \quad \exists L \subseteq S \quad |L| = \ell \quad \forall x,y \in L \quad h_W(x)=h_W(y)] \leq \binom{c\log{n}}{k\log{n}}\binom{n}{\ell}2^{-(\ell-1)k\log{n}} \leq 2^{c\log{n}}2^{\ell\log{\frac{en}{\ell}}}2^{-(\ell-1)k\log{n}} = 2^{c\log{n}+\ell\log{e}+\ell\log{n}-\ell\log{\ell}-\ell k\log{n}+k\log{n}}$. It is easy to verify that if $\ell = 4c\log{n}$ the last expression is smaller than $1/n^4$.
\end{proof}

The last lemma guarantees that on random input vectors for any choice of $W$ the length of all lists in the query graph are supposed to be of length at most $4c\log{n}$ w.h.p. Therefore, instead of saving bitmaps for both lists in the top levels of the query graph and long lists in the bottom levels of the query graph, we need to save bitmaps just for the former as the latter do not exist. Consequently, for $c\log{n}$-length input vectors we just create the query graph with nodes representing $\log{n}$-length vectors and save bitmaps for the top $\delta \log{n}$ levels, for some $\delta>0$. The space required by these bitmaps is $n^{c-1+\epsilon}$, for some $\epsilon > 0$ that can be as small as possible by choosing appropriate small value for $\delta$. To conclude, we have obtained the following result:

\begin{theorem}
OV indexing on random input vectors can be solved in expected truly sublinear query time using $O(n^{c-1+\epsilon})$ space, for any $\epsilon > 0$.
\end{theorem}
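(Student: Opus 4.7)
The plan is to instantiate the TopLevelsQueryGraph construction of Section~\ref{sec:second_solution} with $k=1$ and use the preceding lemma to drop the bottom-levels bitmaps altogether. I would fix any set $W\subseteq[c\log n]$ of size $\log n$ (the choice is immaterial because the lemma is uniform in $W$), build the query graph on the $n$ vertices labelled by all $\log n$-bit strings, and for each vertex $v_i$ place in $L_i$ the input vectors whose $W$-projection equals the bitwise complement of $\alpha_i$. The lemma then yields $|L_i|\le 4c\log n$ simultaneously for every $i$ with probability $1-O(1/n^4)$, so on this high-probability event every bottom-level list is already short enough that the ``long-list'' bitmaps of Section~\ref{sec:second_solution} are unnecessary.

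For the space, I would keep bitmaps only at the vertices in the top $\delta\log n$ levels, for a small constant $\delta>0$ to be fixed at the end. Each such bitmap has size $n^{c-1}$, and the number of top-level vertices is
\[
\sum_{j=0}^{\delta\log n}\binom{\log n}{j} \;=\; \tilde O(n^{-\delta\log\delta-(1-\delta)\log(1-\delta)})
\]
by the Stirling estimate of Appendix~\ref{sec:approx_binomial}; this exponent tends to $0$ as $\delta\to 0$, so choosing $\delta$ small enough forces total space $O(n^{c-1+\epsilon})$ for any prescribed $\epsilon>0$ (the lists themselves contribute only $\tilde O(n)$, which is dominated). The query procedure projects $q$ onto $W$ to obtain $\alpha=q|_W$, locates its vertex $v_i$, and answers in $O(1)$ via the bitmap if $v_i$ is in the top $\delta\log n$ levels; otherwise $\alpha$ has more than $\delta\log n$ ones, so at most $2^{\log n-\delta\log n}=n^{1-\delta}$ descendants of $v_i$ need to be traversed, and at each visited vertex I scan the $O(\log n)$ vectors of its list and check orthogonality on the remaining $(c-1)\log n$ coordinates in $O(1)$ time. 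This gives query time $\tilde O(n^{1-\delta})$, truly sublinear.

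The only delicate point is that the lemma is high-probability rather than worst-case, so the list-length guarantee, and hence the sublinear query time, holds only \emph{in expectation} over the random input distribution, which is exactly the wording of the theorem. Since the failure probability in the lemma is $O(1/n^{4})$ (and can be amplified to $O(1/n^{C})$ for any constant $C$ by the same argument with a slightly larger logarithmic list-length threshold), even a trivial $O(n)$ fallback on the bad event contributes a negligible additive term to the expected query time, so there is no real obstacle here; the work has been done by the preceding lemma, and the theorem follows by plugging its bound into the Section~\ref{sec:second_solution} construction with the long-list bitmaps removed.
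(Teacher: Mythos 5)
Your proposal is correct and follows essentially the same route as the paper: instantiate the Section~\ref{sec:second_solution} construction with $k=1$, invoke the preceding lemma to conclude that on random input every list is $O(\log n)$ long w.h.p.\ (so the long-list bitmaps become unnecessary), keep bitmaps only in the top $\delta\log n$ levels of the query graph, and let $\delta\to 0$ to push the space exponent down to $c-1+\epsilon$. Your accounting of the space ($n^{c-1}$ per bitmap times roughly $n^{-\delta\log\delta-(1-\delta)\log(1-\delta)}$ bitmaps) and of the query time ($n^{1-\delta}$ vertices traversed, each with an $O(\log n)$ list) matches the paper, and your remark that the $O(1/n^{4})$ failure event with an $O(n)$ fallback is negligible in expectation cleanly addresses the ``expected'' phrasing of the theorem, which the paper leaves implicit.
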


This improved space complexity for random input makes it tempting to think that the same property holds even for worst case input. More specifically, it is enough to have just one $W$ that will map all input vectors to short or long lists. It turns out that for worst case input this cannot be achieved. In the following lemma we show how to create a worst case input vectors such that many lists in the query graph are neither too short nor too long.

\begin{lemma}
There exist $n$ $c\log{n}$-length input vectors such that for all $W \subseteq [k\log{n}]$ there are $\Theta(n)$ vectors that are mapped by $h_W$ to lists of size between $n^{1/6}$ and $n^{2/3}$
\end{lemma}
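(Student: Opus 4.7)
The approach is a direct probabilistic construction. I would take $S$ to be a multiset composed of $m := n^{2/3}$ ``base'' vectors $b_1,\ldots,b_m$ drawn independently and uniformly from $\{0,1\}^{c\log n}$, each replicated with multiplicity $n^{1/3}$. (If the statement is read as requiring distinct input vectors, each copy can be made distinct by dedicating an $\epsilon\log n$--coordinate identifier; the analysis below is unaffected, so I describe the cleaner multiset version.)

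The key observation is that, because duplicates of a base vector $b_i$ share the same $W$-projection, every list
\[
  L_y \;=\; \{\,v \in S : h_W(v) = y\,\}
\]
decomposes as $|L_y| = n^{1/3}\cdot N_y^W$, where $N_y^W := |\{\,i : h_W(b_i) = y\,\}|$ counts base vectors landing in bin $y$. Thus bounding list sizes reduces to controlling the maximum bin load in a balls-into-bins experiment that throws $m=n^{2/3}$ balls uniformly into $n^k$ bins.

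For a fixed $W$ with $|W|=k\log n$, I would apply the standard binomial tail bound
\[
  \Pr[N_y^W \geq \tau] \;\leq\; \binom{m}{\tau}\,n^{-k\tau} \;\leq\; \frac{n^{(2/3)\tau - k\tau}}{\tau!},
\]
and union-bound over the $n^k$ bins to get $\Pr[\max_y N_y^W \geq \tau] \leq n^{k-\tau/3}/\tau!$ for every $k\geq 1$. Choosing $\tau = 3(c+k+1)$ pushes this below $n^{-c-1}$. A further union bound over the $\binom{c\log n}{k\log n}\leq n^c$ choices of $W$ then shows that, with probability at least $1-1/n$, every admissible $W$ satisfies $1\leq N_y^W\leq\tau$ on each non-empty bin.

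For any such realization every non-empty list has $n^{1/3}\leq |L_y|\leq\tau\cdot n^{1/3}$, and since $\tau$ is a constant both endpoints sit inside $[n^{1/6},n^{2/3}]$ with a polynomial safety margin. Because every input vector belongs to the (non-empty) list indexed by its own base vector's projection, all $n$ of them --- hence $\Theta(n)$ --- fall into lists of the prescribed medium size, proving the lemma by the probabilistic method. The main obstacle is making the single-$W$ tail estimate strong enough to absorb the union bound over all $W$; the large polynomial gap between $n^{1/3}$ and both endpoints $n^{1/6}$, $n^{2/3}$ gives ample slack to take $\tau$ constant, so the difficulty is bookkeeping rather than conceptual.
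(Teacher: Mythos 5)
Your multiset version is internally consistent, but it does not prove the stated lemma, and the suggested fix for distinctness is where the argument breaks down. The OV indexing problem is defined over a \emph{set} $S$ of input vectors, so the lemma is about $n$ distinct vectors; replicating each of $n^{2/3}$ base vectors $n^{1/3}$ times gives a multiset, not a valid input. You wave this off by ``dedicating an $\epsilon\log n$-coordinate identifier,'' but this fails on two counts. First, $\epsilon\log n$ bits with $\epsilon<1/3$ give only $n^\epsilon < n^{1/3}$ labels, which cannot distinguish $n^{1/3}$ copies; you need at least $\tfrac{1}{3}\log n$ identifier bits. Second, and more importantly, once identifier coordinates exist, the quantifier ``for all $W$'' bites: a $W$ that contains all (or most of) the identifier coordinates shatters every duplicate group into (near-)singletons, and the non-identifier bits of $W$ then spread the remaining $m=n^{2/3}$ base vectors over roughly $n^{k-1/3}$ bins, giving lists of size $O(1)$ for $k\ge 1$. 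These are far below $n^{1/6}$, so ``the analysis below is unaffected'' is simply false. The essential difficulty of the lemma is precisely to survive \emph{every} choice of $W$, including those aimed at the structure you planted.

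The paper's proof handles exactly this obstruction by making the ``identifier'' (separation block) live in different coordinate positions for different groups of vectors. Concretely, the vectors are partitioned into $\sqrt n$ groups of $\sqrt n$, each group has a designated block of $0.5\log n$ coordinates where its members take all possible values, and the designated block cycles over the blocks as the group index varies. Then for any $W$ of size $k\log n$ there is a block $B_{j_1}$ receiving at most $\tfrac{1}{3}\log n$ indices of $W$, and the $\Theta(n)$ vectors whose separation block is $B_{j_1}$ land in lists of size between $n^{1/6}$ and $n^{2/3}$: the $W$-bits outside $B_{j_1}$ do not split a group at all, and the at most $\tfrac{1}{3}\log n$ $W$-bits inside $B_{j_1}$ can only shrink a group from $\sqrt n$ down to $n^{1/6}$. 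If you want to retain a probabilistic flavor, you would still need a similar ``identifiers in rotating positions'' structure, together with a case analysis over which block $W$ hits lightly; a single fixed identifier field cannot work.
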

\begin{proof}
We start by showing that the lemma holds for $c=1.5$ and then generalize the construction to any $c>1$. We want to construct $n$ different vectors $v_1,v_2,...,v_n$. We split the $n$ vectors to groups of $\sqrt{n}$ vectors. The group $G_i$ contains the vectors $v_{i\sqrt{n}+1},v_{i\sqrt{n}+2},...,v_{(i+1)\sqrt{n}}$ for $0 \leq i \leq \sqrt{n}-1$. We partition each vector into 3 blocks of $0.5\log{n}$ bits. The block $B_j$ contains the bits in positions $1+0.5 j \log{n}, 2+0.5 j \log{n},..., 0.5 (j+1) \log{n}$. For every $G_i$ where $i \equiv j \pmod 3$ we put at the elements of block $B_j$ the values of all possible boolean vectors of length $0.5\log{n}$ such that each vector in $G_i$ receives a different value. In other words, for each group $G_i$ there is a block of $0.5\log{n}$ bits that separates between the vectors of the group. We call this block the separation block. The placement of the separation block is determined by the value of $i \bmod 3$. For each block $B_j$ we have explained how we fill the elements in the vectors of all groups $G_i$ such that $i \equiv j \pmod 3$. So we have another $2/3\sqrt{n}$ groups that are not filled yet. For each of these groups we pick a different $0.5\log{n}$-length boolean vector $v$ and fill the elements of block $B_j$ in all vectors of the group by the elements of $v$. This completes the construction of the input vectors.

Now, consider any choice of $W$ of size $\log{n}$ (the same reasoning works for any $1 \leq k < c$). We are guaranteed that there is a block $B_{j_1}$ that contains at most $1/3\log{n}$ elements from $W$ and a block $B_{j_2}$ that contains at least $1/3\log{n}$ elements from $W$. Therefore, if we focus on the vectors of all groups $G_i$ such that $i \equiv j_1 \pmod 3$. The bits from $W$ that are not in $B_{j_1}$ split these vectors into groups of size at least $\sqrt{n}$ (as all vectors in the same group are not split by bits not in $B_{j_1}$) and at most $n^{2/3}$ (as each of the $1/3\log{n}$ bits from $B_{j_2}$ doubles the number of groups). Each of the bits in $B_{j_1}$ divides the groups into two halves. Therefore, the number of vectors in each group is at least $n^{1/6}$ and at most $n^{2/3}$. The total number of vectors in these groups is $O(n)$.

For input vectors of length $c\log{n}$ if $c = 0.5a$ for some integer $a$, then we can have $a$ blocks of size $0.5\log{n}$ and the placement of the separation block will be determined by $i \bmod a$. The rest of the construction and the analysis is similar to what we do in case $a=3$.

If $c=0.5a+b$ for some integer $a$ and $0 < b < 0.5$, then all blocks are of size $0.5\log{n}$ except the last block that will be of size $b\log{n}$. All the groups that their separation block is of size $0.5\log{n}$ will be of size $\sqrt{n}$. All groups that their separation block is of size $b\log{n}$ will be of size $n^{b}$ (the last groups can be smaller if there are not enough vectors left to fill them). Except this change all the other parts of the construction and anlysis are similar to the case  of $1.5\log{n}$-length vectors.
\end{proof}

In the last lemma we can obtain values other than $n^{1/6}$ and $n^{2/3}$ by changing the basic block size from $0.5\log{n}$ to some other $r\log{n}$ for $r>0$.

This demonstrates that for worst-case input vectors, as opposed to random input vectors, there can be $O(n)$ vectors that are mapped to lists that are neither too short nor too long. The exact size can be controlled by proper choices of block and group size.

\section{Further Research}
In this paper we presented several algorithms to solve OV indexing that obtain truly sublinear query time and require $O(n^{c-\gamma})$ space for some constant $0<\gamma<1$. For random input vectors we demonstrated in Section~\ref{sec:random_ov} how to obtain sublinear query time solution to OV indexing using $O(n^{c-\gamma})$ for \emph{any} $0<\gamma<1$. We note that the preprocessing time of all algorithms is polynomial in $n$.

The main question regarding OV indexing, following this paper, is can one obtain a sublinear query time solution to OV indexing that requires only $O(n^{c-1})$ space. This question is interesting even if we allow an unlimited preprocessing time. We conjecture that there is no such solution to OV indexing:

\begin{conjecture}
There is no truly sublinear query time solution to OV indexing that requires only $O(n^{c-1})$ space.
\end{conjecture}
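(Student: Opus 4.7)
The plan is to establish this conjecture by reduction from a base hardness assumption rather than by direct cell-probe arguments, in keeping with the philosophy of the rest of the paper. The two natural starting points are (i) the SAT indexing framework of Goldstein et al.~\cite{GKLP17} and (ii) SETH itself, via a preprocessing-aware sharpening of the reduction sketched in Appendix~\ref{sec:hardness_of_OV_indexing}. I would pursue (i) first, because the two-stage (preprocessing, then query) structure of SAT indexing matches that of OV indexing exactly, which makes it plausible to translate a space lower bound for one into a space lower bound for the other without losing exponents.

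First I would reopen the standard SETH-to-OV reduction and carry out its indexing form with careful parameter tracking. Given a $k$-CNF formula $\phi$ on $N$ variables with $m$ clauses, partition the variables into a preprocessing part $X$ of size $\alpha N$ and a query part $Y$ of size $(1-\alpha)N$. Each assignment to $X$ produces a clause-satisfaction vector of length $m$, giving $n=2^{\alpha N}$ input vectors of length $d=m$; each assignment to $Y$ produces an analogous query vector; and after sparsifying to $m=\Theta(N)$ this sets $d=c\log n$ for a tunable constant $c$. The goal is to show that an OV indexing data structure of size $o(n^{c-1})$ with query time $n^{1-\delta}$ would translate back to a SAT indexing data structure violating the corresponding hypothesis in~\cite{GKLP17}; under these parameters the $O(n^{c-1})$ cutoff becomes $O(2^{(c-1)\alpha N})$ on the SAT side, which collapses to essentially $2^{(1-\alpha)N}$ in the regime $c\alpha\approx 1$, precisely the threshold one expects to be the natural cutoff for the data-structure variant of SAT.

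The main obstacle is pinning down the \emph{precise} exponent $c-1$. The quantity $n^{c-1}=n^c/n$ has a strong information-theoretic interpretation: there are $n^c$ distinct query patterns and $n$ inputs, so $n^{c-1}$ is the ``one memory word per (input, query) pair'' threshold, exactly the point at which the data structure can no longer afford a distinct cell for each interaction. Any reduction whose base hypothesis carries multiplicative slack --- as SETH does, with its factor $1-\epsilon$ --- seems only able to rule out $n^{c-\epsilon'}$ for some $\epsilon'$ tied to that slack, not the sharp floor at exponent $c-1$. Bridging this gap appears to require a tighter, exponent-sharp formulation of the base conjecture, ideally stated directly in terms of the SAT indexing variant rather than inherited from SETH, along the lines of the refined conjectures discussed in~\cite{GKLP17}.

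As a complementary route I would attempt a direct cell-probe lower bound via asymmetric communication complexity, following the line of work on Partial Match: Alice holds the query vector, Bob holds the data structure, and the aim is to show that any $(a,b)$-protocol with $b$ polylogarithmic forces $a=\Omega(\log n)$ whenever the table has size $o(n^{c-1})$. OV indexing is a structured subcase of Partial Match, so the existing partial-match lower bounds give a starting point, but they fall short of the $n^{c-1}$ threshold. The hardest step in either route is identifying the specific combinatorial feature of orthogonality --- as opposed to generic partial matching --- that pushes the space bound from the $n^{c-\gamma}$ achievable in this paper with small $\gamma$ all the way up to the sharp $n^{c-1}$ floor, and I expect this last step to be where the genuine difficulty lies.
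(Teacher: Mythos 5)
The statement you are addressing is a \emph{conjecture}, not a theorem: the paper states it in the ``Further Research'' section as an open problem and supplies no proof whatsoever. There is therefore no ``paper's own proof'' to compare your attempt against, and your text is accordingly not a proof but a research plan --- you say as much yourself in the final sentence, where you flag the decisive step as the place ``where the genuine difficulty lies.'' Treated as a plan it is sensible and well informed, but it does not establish the conjecture.

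The gap is exactly the one you diagnose, and it is worth naming precisely. The reduction in Appendix~\ref{sec:hardness_of_OV_indexing} only rules out \emph{polynomial preprocessing time} together with $O(n^{1-\delta})$ query time; it says nothing about space, and the conjecture is explicitly meant to hold even with unlimited preprocessing. Any SETH-based route carries the multiplicative $(1-\epsilon)$ slack you describe and so can at best exclude $n^{c-\epsilon'}$ for some $\epsilon'$ tied to that slack; it has no mechanism to hit the sharp exponent $c-1$. Likewise, the asymmetric-communication / cell-probe lower bounds known for Partial Match in the regime $d=\Theta(\log n)$ are far below the $n^{c-1}$ threshold. Your proposal identifies both obstacles correctly but offers no new ingredient that overcomes either one. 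Until someone supplies the combinatorial property of orthogonality that forces the space all the way up to $n^{c-1}$ (rather than merely to $n^{c-\gamma}$ for some $\gamma<1$, which is what the paper's upper bounds leave open), this remains a conjecture, not a theorem.
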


Even if that conjecture is false, it is of utmost interest to find the exact lower bound on the space requirements of OV indexing for both unlimited and polynomial preprocessing time. Finding the exact space requirements can be used to obtain conditional lower bounds on the \emph{space} complexity of many problems known to be conditionally hard in terms of time based on OV.       

\bibliographystyle{plain} \bibliography{ms}

\newpage
\appendix
\section*{Appendix}

\section{Conditional Time Hardness for OV indexing}\label{sec:hardness_of_OV_indexing}
The following conditional lower bound on OV was proved by Williams~\cite{Williams05} (see also~\cite{WY14}):

There is no $\delta>0$ such that for all $c \geq 1$ there is a solution to Orthogonal Vectors problem on $n$ vectors in $\{0,1\}^{c\log{n}}$ running in $O(n^{2-\delta})$ time, unless SETH is false.

Given this conditional lower bound it is easy to get the following conditional lower bound on the OV indexing problem, by just solving OV using a solution to OV Indexing:

There is no $\delta>0$ such that for all $c \geq 1$ there is a solution to Orthogonal Vectors Indexing problem on $n$ vectors in $\{0,1\}^{c\log{n}}$ having $O(n^{2-\delta})$ preprocessing time and $O(n^{1-\delta})$ query time, unless SETH is false.

However, we can get better conditional lower bound by reducing directly from $k$SAT using the same ideas as in~\cite{Williams05} and~\cite{AWW14}:

\begin{lemma}
There is no $\alpha$, $\delta>0$ such that for all $c \geq 1$ there is a solution to Orthogonal Vectors Indexing problem on $n$ vectors in $\{0,1\}^{c\log{n}}$ having $O(n^{\alpha})$ preprocessing time and $O(n^{1-\delta})$ query time, unless SETH is false.
\end{lemma}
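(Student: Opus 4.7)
The plan is to adapt the classical split-and-list reduction of Williams from $k$SAT to OV, but to use an \emph{asymmetric} split so that we can absorb an arbitrarily large polynomial preprocessing overhead. First I would start with a $k$SAT instance on $N$ variables and $M$ clauses, and apply the Sparsification Lemma of Impagliazzo, Paturi and Zane to reduce to the case $M = O(N)$ (so that the vector dimension we eventually produce is linear in $N$).

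Next I partition the variables into two disjoint blocks $L$ and $R$ with $|L| = \beta N$ and $|R| = (1-\beta)N$, where the constant $\beta \in (0,1)$ will be chosen at the end depending on $\alpha$ and $\delta$. For every partial assignment $a$ to $L$ I form a vector $u_a \in \{0,1\}^M$ with $(u_a)_i = 1$ iff clause $i$ is \emph{not} already satisfied by $a$, and similarly I form $v_b$ for each partial assignment $b$ to $R$. The formula is satisfiable iff some pair $(a,b)$ satisfies $\langle u_a, v_b\rangle = 0$, i.e.\ iff some $v_b$ is orthogonal to some $u_a$. I then preprocess the set $S = \{u_a : a \in \{0,1\}^L\}$ of $n := 2^{\beta N}$ vectors as an OV indexing instance, and issue the $2^{(1-\beta)N}$ queries $v_b$. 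Crucially, since $M = O(N)$, the dimension of these vectors is $M = (O(1)/\beta)\log n = c \log n$ for a constant $c$ that depends only on $\beta$ and on the sparsification constant, so the hypothesised data structure for this fixed $c$ applies.

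The total running time of this procedure is
\[
O(n^{\alpha}) \;+\; 2^{(1-\beta)N}\cdot O(n^{1-\delta}) \;=\; O\!\bigl(2^{\beta\alpha N}\bigr) + O\!\bigl(2^{N(1-\beta\delta)}\bigr).
\]
Choosing $\beta := \min\{\tfrac{1}{2\alpha},\,\tfrac{1}{2}\}$ makes the preprocessing exponent $\beta\alpha \le \tfrac12$, and the query exponent becomes $1 - \beta\delta \le 1 - \tfrac{\delta}{2\alpha} < 1$. Hence the whole $k$SAT instance is solved in $O(2^{(1-\epsilon)N})$ time for $\epsilon = \min\{\tfrac12, \tfrac{\delta}{2\alpha}\} > 0$, simultaneously for every $k$ (since the reduction is the same for each $k$, with only the sparsification constant changing). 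This contradicts SETH.

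The main obstacle is verifying that the asymmetric split is compatible with the OV indexing regime: we need (i) the dimension to remain $c \log n$ for a fixed $c$, which is why the Sparsification Lemma (keeping $M = O(N)$) is essential, and (ii) $c$ to depend only on $\alpha$ (not on $k$ or on the particular $k$SAT instance), which is why we use the uniform ``for all $c \ge 1$'' quantifier in the hypothesised data structure. Everything else is a straightforward calculation matching the Williams reduction, with the extra freedom of tuning $\beta$ to beat any fixed preprocessing exponent $\alpha$.
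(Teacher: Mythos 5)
Your argument is essentially the paper's: sparsify, split the $N$ variables asymmetrically (your $\beta$ plays exactly the role of the paper's $1/t$), preprocess the exponentially smaller side $2^{\beta N}$ as the OV indexing input, and issue the assignments to the larger side as queries. The one small piece of bookkeeping you drop is that sparsification produces a disjunction of $O(2^{\epsilon N})$ sparse formulas, so the total running time carries a multiplicative $2^{\epsilon N}$ factor that must be dominated by choosing $\epsilon$ smaller than your savings $\min\{\tfrac12,\beta\delta\}$ (the paper does this by taking $\epsilon<\delta/(2t)$); also the dimension constant $c$ in fact depends on $k$ through the sparsification constant $f(k,\epsilon)$, not only on $\alpha$, which is harmless precisely because the hypothesis is quantified over all $c\ge 1$.
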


\begin{proof}
Given a $k$-CNF formula $\phi$ on $n$ variables, we use the sparsification lemma~\cite{IP01} to obtain $m=O(2^{\epsilon n})$ $k$-CNF formulas $\phi_1,\phi_2,...,\phi_m$ such that one of them is satisfiable iff the $\phi$ is satisfiable. The number of clauses in each $\phi_i$ is at most $f(k,\epsilon) n$ for some function $f$.
Let $x_1,x_2,...,x_n$ be the variables in $\phi$. We focus on some specific formula $\phi_i$ having the clauses $y_1,y_2,...,y_{\ell}$. We generate all possible partial assignments to the first $n/t$ variables. For each partial assignment, we generate a boolean vector $v_k$ such that the element at position $j$ in $v$ is 0 if the partial assignment satisfies $y_j$ and 1 otherwise. These vectors become the input vectors of an instance of OV indexing. As a query we generate a boolean vector $u$ for each partial assignment to the last $n(1-1/t)$ variables. It is easy to verify that $\phi_i$ is satisfied iff there is a query vector $u$ that is orthogonal to some $v_k$. Therefore, we can decide if $\phi$ is satisfiable by $O(2^{\epsilon n})$ instances of OV indexing. The length of the input vectors in each instance is at most $f(k,\epsilon)n$, the total number of input vectors is $2^{n/t}$ and the total number of queries is $2^{n(1-1/t)}$. Let $O(N^{\alpha})$ be the preprocessing time for OV indexing on $N$ input vectors and let the query time be $O(N^{1-\delta})$ for some $\delta>0$. In our case $N=2^{n/t}$. The total running time of our reduction is $2^{\epsilon n}(2^{n\alpha/t}+2^{n(1-1/t)}2^{n/t(1-\delta)})$. Setting $\epsilon<\delta/2t$ and $t>\alpha/(1-\delta)$ (choosing $t$ to be the smallest number that satisfies this inequality and guarantees than $n/t$ is an integer) we get a total running time of $O^{*}(2^{(1-\epsilon ')n})$ for solving $k$SAT for some $\epsilon'>0$.
\end{proof}

We note that a similar result can also be obtained by reducing from Orthogonal Vectors. This can be achieved by splitting the set $S$ of input vectors to many sets with small amount of vectors and then querying each one of them by all vectors of $S$. 

\newpage

\section{Approximation of Binomial Coefficients} \label{sec:approx_binomial}

Throughout this paper we have to calculate the value of binomial coefficients of this form $\binom{m\log{n}}{k\log{n}}$. The next lemma gives an approximation for this binomial coefficient based on Stirling's approximation.

\begin{lemma}
$\binom{m\log{n}}{k\log{n}} \approx \tilde{O}(n^{m\log{m}-k\log{k}-(m-k)\log{(m-k)}})$ for all $m>k>0$.
\end{lemma}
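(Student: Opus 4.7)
The plan is a direct application of Stirling's approximation. Recall Stirling gives $\log_2(x!) = x\log_2 x - x\log_2 e + O(\log x)$. I will apply this to each of the three factorials in the binomial coefficient
\[
\binom{m\log n}{k\log n} \;=\; \frac{(m\log n)!}{(k\log n)!\,((m-k)\log n)!},
\]
take $\log_2$ of both sides, and verify that the unwanted terms cancel, leaving an exponent matching the claim.

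After substitution, the three numerator/denominator contributions give, ignoring $O(\log\log n)$ error terms,
\[
\log_2\!\binom{m\log n}{k\log n} = m\log n\,\log_2(m\log n) - k\log n\,\log_2(k\log n) - (m-k)\log n\,\log_2((m-k)\log n) + C,
\]
where $C$ collects the $-x\log_2 e$ terms. Expanding $\log_2(a\log n) = \log_2 a + \log_2\log n$ splits each principal term into a piece of the form $a\log n\cdot\log_2 a$ and a piece of the form $a\log n\cdot\log_2\log n$. The key observation is that the coefficients of the $\log_2\log n$ pieces are $m$, $-k$, $-(m-k)$, which sum to zero; likewise the coefficients in $C$ are $-m\log_2 e + k\log_2 e + (m-k)\log_2 e = 0$. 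So both undesired contributions vanish identically.

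What remains is exactly
\[
\log_2\!\binom{m\log n}{k\log n} \;=\; \bigl(m\log m - k\log k - (m-k)\log(m-k)\bigr)\log n \;+\; O(\log\log n),
\]
and exponentiating yields $\binom{m\log n}{k\log n} = n^{m\log m - k\log k - (m-k)\log(m-k)}\cdot 2^{O(\log\log n)}$. Since the $\tilde{O}$ notation used throughout the paper suppresses polylogarithmic (and even $n^{\epsilon}$) factors, this is precisely the asserted bound.

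There is no real obstacle here beyond careful bookkeeping; the only mildly subtle point is to notice that the $\log_2\log n$ contributions cancel because the binomial coefficient is a ratio of factorials whose arguments sum correctly, which is exactly why an approximation in terms of the usual entropy-style exponent $m\log m - k\log k - (m-k)\log(m-k)$ is clean. I would organize the write-up as: (i) state Stirling in the form used, (ii) write out the three expansions, (iii) group terms by type and observe the cancellations, (iv) conclude.
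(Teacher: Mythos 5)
Your proof is correct and is essentially the same as the paper's: both are direct applications of Stirling's approximation, and the cancellations you identify in log-space (of the $\log_2\log n$ terms and the $\log_2 e$ terms, because the exponents $m$, $-k$, $-(m-k)$ sum to zero) are exactly the cancellations of the $(\log n/e)^{(\cdot)\log n}$ factors that occur in the paper's multiplicative form. Working additively with $\log_2(x!)$ versus multiplicatively with $x!\approx\sqrt{2\pi x}(x/e)^x$ is a cosmetic difference, not a different route.
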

\begin{equation*}
\begin{split}
& \binom{m\log{n}}{k\log{n}} = \frac{(m\log{n})!}{(k\log{n})!((m-k)\log{n})!} \\
& \approx \frac{\sqrt{2\pi m\log{n}}(\frac{m\log{n}}{e})^{m\log{n}}}{\sqrt{2\pi k\log{n}}(\frac{k\log{n}}{e})^{k\log{n}}\sqrt{2\pi (m-k)\log{n}}(\frac{(m-k)\log{n}}{e})^{(m-k)\log{n}}} \\
& = \sqrt{\frac{m}{2\pi k(m-k)\log{n}}}\frac{m^{m\log{n}}}{k^{k\log{n}}(m-k)^{(m-k)\log{n}}} \\
& = \sqrt{\frac{m}{2\pi k(m-k)\log{n}}}\frac{n^{m\log{m}}}{n^{k\log{k}}n^{(m-k)\log{(m-k)}}} \\
& = \sqrt{\frac{m}{2\pi k(m-k)\log{n}}}n^{m\log{m}-k\log{k}-(m-k)\log{(m-k)}} \\
& =\tilde{O}(n^{m\log{m}-k\log{k}-(m-k)\log{(m-k)}})
\end{split}
\end{equation*}

\begin{corollary}
$\binom{m\log{n}}{\log{n}} \approx \tilde{O}(n^{m\log{m}-(m-1)\log{(m-1)}})$ for all $m>0$.
\end{corollary}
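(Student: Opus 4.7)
The plan is to obtain the Corollary as an immediate instantiation of the preceding Lemma with $k = 1$. Substituting $k = 1$ into the Lemma's bound $\binom{m\log n}{k\log n} \approx \tilde{O}(n^{m\log m - k\log k - (m-k)\log(m-k)})$ and using the standard convention $1 \log 1 = 0$ (so the $k \log k$ term vanishes), the exponent collapses to $m\log m - (m-1)\log(m-1)$. This yields $\binom{m\log n}{\log n} \approx \tilde{O}(n^{m\log m - (m-1)\log(m-1)})$, matching the claimed bound, so the main step is a one-line substitution.

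The only items worth a brief sanity check are (i) that the Lemma's hypothesis $m > k > 0$ is satisfied at $k = 1$, which holds for all $m > 1$, and (ii) the degenerate case $m = 1$, where one can verify by direct inspection that $\binom{\log n}{\log n} = 1 = n^0$, while the exponent also evaluates to $0$ under the convention $0 \log 0 = 0$. Both are routine.

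There is no real obstacle here: given the Lemma, whose proof is already a direct Stirling computation, the Corollary follows by plugging in $k = 1$ and simplifying, with only the standard conventions $0 \log 0 = 0$ and $1 \log 1 = 0$ (consistent with $\lim_{x \to 0^+} x \log x = 0$) to invoke. Accordingly, the proof proposal is simply to cite the Lemma and perform the substitution.
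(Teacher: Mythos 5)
Your proof is correct and matches the paper's intent exactly: the corollary is the $k=1$ instantiation of the preceding lemma, with $1\log 1 = 0$ collapsing the exponent to $m\log m - (m-1)\log(m-1)$. Your side note on $m=1$ (outside the lemma's $m>k$ hypothesis, but trivially $\binom{\log n}{\log n}=1=n^0$) is a reasonable sanity check, though the paper leaves it implicit.
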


\end{document}